\newcommand{\naturals}{\mathbb{N}}
\newcommand{\reals}{\mathbb{R}}
\newcommand{\coss}[1]{\cos\left(#1\right)}
\newcommand{\arccoss}[1]{\arccos\left(#1\right)}
\newcommand{\sinn}[1]{\sin\left(#1\right)}
\newcommand{\ignore}[1]{}
\newcommand{\norm}[1]{\left\lVert#1\right\rVert}
\newcommand{\wsdisk}[1]{$\textsc{WS}_{\textsc{disk}}^{#1}$}
\newcommand{\wsngon}[2]{$\textsc{WS}_{#2\textsc{-gon}}^{#1}$}
\begin{document}
\title{Weighted Group Search on the Disk \\
\& Improved Lower Bounds for Priority Evacuation
\thanks{Research supported in part by NSERC.}
}
%
%
\author{Konstantinos Georgiou\inst{1} 
\and Xin Wang\inst{1}
}
\authorrunning{K. Georgiou and X. Wang}
%
\institute{
Department of Mathematics, Toronto Metropolitan University, Toronto, ON, Canada
\email{\{konstantinos,x85wang\}@torontomu.ca} 
}

\maketitle              

\vspace{-.8cm}

\begin{abstract}
We consider \emph{weighted group search on a disk}, which is a search-type problem involving 2 mobile agents with unit-speed. The two agents start collocated and their goal is to reach a (hidden) target at an unknown location and a known distance of exactly 1 (i.e., the search domain is the unit disk). The agents operate in the so-called \emph{wireless} model that allows them instantaneous knowledge of each others findings.  The termination cost of agents' trajectories is the worst-case \emph{arithmetic weighted average}, which we quantify by parameter $w$, of the times it takes each agent to reach the target, hence the name of the problem. 

Our work follows a long line of research in search and evacuation, but quite importantly it is a variation and extension of two well-studied problems, respectively. The known variant is the one in which the search domain is the line, and for which an optimal solution is known. Our problem is also the extension of the so-called \emph{priority evacuation}, which we obtain by setting the weight parameter $w$ to $0$. For the latter problem the best upper/lower bound gap known is significant. 

Our contributions for weighted group search on a disk are threefold. 
\textit{First}, we derive upper bounds for the entire spectrum of weighted averages $w$. Our algorithms are obtained as a adaptations of known techniques, however the analysis is much more technical. 
\textit{Second}, our main contribution is the derivation of lower bounds for all weighted averages. This follows from a \emph{novel framework} for proving lower bounds for combinatorial search problems based on linear programming and inspired by metric embedding relaxations. 
\textit{Third}, we apply our framework to the priority evacuation problem, improving the previously best lower bound known from $4.38962$ to $4.56798$, thus reducing the upper/lower bound gap from $0.42892$ to $0.25056$.

\keywords{
Mobile Agents, Combinatorial Search, Lower Bounds, Linear Programming}
\end{abstract}
%
%
%




\section{Introduction}
Autonomous mobile agent searching over geometric domains such as lines, disks, circles,  triangles and polygons has been the subject of extensive research over the last decades. In these problems, a fleet of agents is tasked with finding a hidden item in the geometric domain while complying with searchers' specifications. One of the most significant parameters that distinguish these variations is how the cost of the solution is quantified, effectively identifying the set of feasible solutions and changing the computational boundaries of the underlying problem. 
Indeed, in traditional search problems, e.g. \emph{search} and rescue scenarios, one is concerned just with the finding of the hidden item, hence quantifying the cost as the time it takes the first agent to reach the target. In the other extreme, e.g. in \emph{evacuating} scenarios, one is concerned with minimizing the time the last agent reaches the hidden item. 
These objective variations are combined with other fundamental specifications, such as the communication model and possibility of faultiness. 

Despite the growing number of problems in the field, the number of problems for which matching upper and lower bounds are known is also increasing. Empirically, problems that admit full symmetry, involving the agents' specifications, domain, communication model, etc., are those that admit strong or matching upper and lower bounds. However, even a small deviation from symmetry makes any lower bounds to the problems particularly difficult to tackle. Examples include the face-to-face communication model, where knowledge is shared in an asymmetric way during the execution of the search, different agent speeds, or abstract domains such as triangles. Recently, asymmetry was also introduced as part of the way the search cost is quantified by considering the objective of only a \emph{distinguished agent} reaching the hidden item, a problem known as \emph{priority evacuation}.
Negative results for this and similar problems are rare, and usually weak and 
challenging. 

The main contribution of our work is a framework that provides lower bounds for combinatorial search problems. More generally, our starting point is the study of a generalization of the priority evacuation problem on the disk, which we obtain by considering \emph{asymmetric} cost functions. Aside from the upper bounds we derive for these problems, our primary contribution is a framework for proving lower bounds (for these as well as for more general problems) that is based on linear programming and metric embedding relaxations. We demonstrate the usefulness of the framework by establishing lower bounds for our weighted evacuation problems, and in particular by improving the previously best lower bound known for the priority evacuation problem. 

\ignore{
For this family of problems we obtain upper bounds.
search cost is the arithmetic weighted average of the times the agents reach the hidden item, with a weight equal to $0$ corresponding to priority evacuation. 
Our results are multi-fold. Firstly, we provide a family of search algorithms that adapt the agents' trajectories together with the weights of the weighted average. Secondly, and most importantly, we introduce a new framework for proving lower bounds capable of tackling problems lacking symmetry as it arises for example by the cost function.
Our framework relies on the solution of linear programs (inspired by metric embedding relaxations) obtained as relaxations to the exact formulations to optimal solutions to search problems for specific search domains with discrete and finite many locations for the hidden item. Thirdly, and as a byproduct of our results, we derive an improved lower bound 
to the priority evacuation. 
}

\vspace{-.2cm}

\subsection{Related Work}


The study of search-type problems dates back to the 1960's and has resulted to a rich theory that has been summarized in books~\cite{alpern2013search,AlpGal03} and surveys~\cite{czyzowicz2019groupkos,11340,hohzaki2016search}, among others. 
Originally, the problem pertained to the identification of a hidden object within a search domain, and hence the objective was to minimize the time the object was found~\cite{beck1964linear}. 
Already with one searcher, the problem has seen interesting, and surprisingly challenging variations touching on the searcher's specs and the search domain, some of which we briefly discuss next. More recently, and with the emergence of fleet-robotics, search-type problems  were considered with multiple searchers, see e.g.~\cite{CzyzowiczGGKMP14}.

A basic example of a search space is the so-called linear search problem that has been considered with one~\cite{baezayates1993searching} or multiple searchers~\cite{ChrobakGGM15}. The same problem has also been considered with the objective of minimizing the weighted average of the searchers' termination times~\cite{GLweightedLine2023} (that we also use in this work). Other 1-dimensional settings include variations such as searching rays~\cite{BrandtFRW20} or graphs~\cite{AngelopoulosDL19},
or also search for multiple objects~\cite{Borowiecki0DK16,CzyzowiczDGKM16}.

Searching two-dimensional domains has become a more dominant topic in the last decade or so. 
Considered domains include polygons~\cite{FeketeGK10}, the quite popular disk~\cite{CzyzowiczGGKMP14}, 
the plane~\cite{feinerman2017ants},
regular polygons~\cite{czyzowicz2020priority123}
the equilateral triangle and square~\cite{ChuangpishitMNO20,CzyzowiczKKNOS15}, 
arbitrary triangles~\cite{georgiou2022triangle}, and $\ell_p$ unit disks~\cite{GLLKllp2023}. 
Many of these problems have also seen variations pertaining to the communication model or more generally to the  the searchers' specifications, 
e.g. searching the disk in the face-to-face model~\cite{disser2019evacuating},
with different searchers' speeds~\cite{BampasCGIKKP19},
and with different searchers' communication capabilities~\cite{czyzowicz2021groupevac,GGK2022asym}.
Last but not least, search problems have also been considered under faultiness settings, see for example
\cite{BGMP2022pfaulty,CzyzowiczGGKKRW17,czyzowicz2021searchbyz,czyzowicz2021searchnew,CzyzowiczKKNO19,GeorgiouKLPP19,Sun2020}

A number of search problems have also been considered with less standard objectives.
For example~\cite{chuangpishit2020multi} considered a multi-objective search-type problem, 
\cite{0001DJ19} studied search problems under a broad competitive algorithmic lens, 
\cite{MillerP15} considered information/cost trade-offs, 
\cite{czyzowiczICALP,czyzowicz2021energy} considered time/energy trade-offs,
and \cite{kranakis2019search} introduced search-and-fetch problems in two dimensions. 
More closely related to our work, is the so-called priority evacuation objective introduced in~\cite{czyzowicz2020priority123,czyzowicz2020priority4} where search termination is called when a distinguished searcher reaches the target.



\subsection{Discussion on New Results}
In this study, we explore a natural extension of the priority evacuation problem on the disk introduced in~\cite{czyzowicz2020priority123,czyzowicz2020priority4}. Our specific problem, termed the \emph{weighted group search} on the disk, is parameterized by $w\in[0,1]$. Parameter $w$ represents a designated arithmetic weighted average (i.e. $\tfrac{x+wy}{1+w}$, when considering the weighted average of $x,y$) that defines the objective function, hence the name of the problem. The same objective was previously considered with the line as the search domain and solved optimally in~\cite{GLweightedLine2023}.
Moreover, our weighted group search problem with $w=0$ corresponds exactly to the previously studied priority evacuation problem on the disk for which the best lower and upper bound known are $4.38962$ and $4.81854$, respectively, creating a notable gap that remains an open problem. 

Our results are derivations of upper and lower bounds for the weighted group search problem on the disk for all $w \in [0,1]$. Our upper bounds are obtained by a family of mobile agent trajectories (algorithms) that are adapting with $w$. The derived algorithm for $w=0$ is exactly the best algorithm known for the previously studied priority evacuation problem (with performance $4.81854$), and therefore our positive results can be understood as generalizations, still technical to analyze, of known algorithmic techniques. 
The main contribution of this work is the introduction of a novel framework for deriving lower bounds for general search problems. 
We discuss the key technical and conceptual ideas in the following section. Here we emphasize the significance of the obtained results in an area where finding strong lower bounds is notoriously challenging and rare.
Indeed, we design a framework for proving lower bounds that is applicable to general objective functions, and to general discrete search domains. Our framework utilizes Linear Programming relaxations that are inspired by metric embedding relaxations of Euclidean 2-dimensional metric spaces induced by optimal search strategies. 
Indeed, we apply our framework to the weighted group search problem on the disk and we find lower bounds for all $w \in [0,1]$, resulting in upper/lower bound gaps that are diminishing with $w$. 
When $w=1$ the bound is optimal. However, the punchline of our new methodology pertains to the lower bound obtained for $w=0$,  corresponding to previously studied priority evacuation. For this problem our techniques allow us to improve the previously best lower bound known to $4.56798$, reducing this way the upper/lower bound gap from $0.42892$ to $0.25056$. 

\subsection{Key Technical and Conceptual Ideas}
Our approach to establishing \emph{upper bounds} for weighted group search involves a generalization of the algorithm presented in~\cite{czyzowicz2020priority123}. We achieve this by specifically parameterizing the algorithm's behavior and performance according to the parameter $w$ of the cost function. Unsurprisingly, when $w=0$, our algorithm aligns with the one detailed in~\cite{czyzowicz2020priority123}.
The inherent challenge in analyzing our family of algorithms, which vary with $w$, lies in the fact that they generate diverse trajectories for the agents, where critical domain points are visited with varying order. These points are possible locations of the hidden item, which have been explored already, but whose close neighborhood has not been explored yet, and hence a adversarial placement of the hidden item arbitrarily close to those points is a possibility. Such points are necessarily introduced by strong search algorithms, as in these algorithms an agent may abandon the search of the disk in order to move in its interior, potentially in order to expedite evacuation should the hidden item be located by the other agent. 
Hence, identifying the optimal choices within that family (as a function of $w$) becomes more technical than the case $w=0$. 
Consequently, quantifying the solution cost for all inputs necessitates a meticulous analysis, which at a high level considers various searched space configurations. 
For this reason, the contribution of this section is primarily only technical in nature.

The main contribution of this work lies in introducing a novel framework for proving \emph{lower bounds} applicable to various combinatorial search-type problems. As an illustration of our methodology, we derive lower bounds for the weighted group search problem across all values of $w$. Specifically, we establish a matching lower bound for $w=1$ and, notably, an improved lower bound for $w=0$ addressing the priority evacuation problem.
The challenge inherent in proving lower bounds for the weighted group search (as well as the priority evacuation or similar problems) stems from the contiguous search domain and the asymmetry of the objective function. Any lower bound argument necessitates the consideration of snapshots of an algorithm at certain time-stamps, identifying a finite collection of potential input placements for which the algorithm must perform well. Additionally, due to the online nature of the problem, the algorithm must operate identically until the input is discovered.
The critical question revolves around how these finite points are processed by the two agents. A key observation is that the time-stamps, combined with potential input placements and agent trajectories, induce a finite 2-dimensional $\ell_2$ metric space. By conditioning on the order in which input placements are visited, one can model the optimal algorithm's performance using a Non-Linear Program (NLP). While solving all NLPs (corresponding to all permutations of placement visitations) identifies the optimal algorithm for the specific finite search domain, the challenge lies in the difficulty of solving these NLPs and providing proper certificates of optimality.
In that direction, a second key observation is that one can relax the induced 2-dimensional $\ell_2$ metric spaces to abstract metric spaces, leading to Linear Programs (LPs). These LPs can be solved efficiently, with accompanying certificates of optimality. Importantly, the optimal values of these LPs serve as lower bounds for the NLPs, consequently establishing lower bounds for the original problems.
Our framework relies among others on symbolic (not numerical), computer-assisted calculations, which we did not exhaust when deriving the lower bounds, suggesting that further improvements are possible by bypassing current computational limitations.

\subsection{Roadmap}
We start with Section~\ref{sec: problem definition notation} where we give some formal definitions and introduce proper terminology. This is followed by Section~\ref{sec: past results} where we present past and new results using the underlying new terminology, along with some key observations. Section~\ref{sec: new framework} introduces our first main contribution, which is a framework based on (metric-inspired) linear program relaxations that give rise to lower bounds for geometric search-type problems. 
In Section~\ref{sec: implied LB priority evacuation} we apply our framework to the previously studied priority evacuation problem with 2 agents, effectively improving the previously best lower bound known. 
Finally, in Section~\ref{sec: arithmetic weighted disk} we study the weighted group search problem on the disk, which is a generalization of the priority evacuation problem, deriving upper and lower bounds, the latter using our general framework of Section~\ref{sec: new framework}.
Finally, in Section~\ref{sec: conclusions} we conclude with some future directions. 

\section{Preliminaries}

\subsection{Problem Definition, Notation \& and some Observations}
\label{sec: problem definition notation}

We consider a class of search-type problems $\textsc{WS}_{\mathcal{D}}^{f}$ (for Wireless Search) with 2 agents in the wireless model, which we define next. 
In these problems, 2 unit speed agents are initially collocated at the origin of the Euclidean plane, i.e. $\reals^2$ equipped with the $\ell_2$ metric. The 2 agents have distinct (known) identities that we call $A_0, A_1$. Given a known geometric object $\mathcal{D}\subseteq \reals^2$ and a known cost function $f:\reals^2 \mapsto \reals$ (non-decreasing in both coordinates), a solution to the problem is given by trajectories $\tau_i:\reals_{\geq0} \mapsto \reals^2$, $i=0,1$, that induce movements for the 2 agents of speed at most 1. 
The two agents operate in the \emph{wireless model} and they need to find a hidden target, in the following mathematical sense. 
For each fixed \emph{target} $I \in \mathcal{D}$, we allow \emph{both} trajectories $\tau_i$ to depend on $I$ \emph{only after} $I$ is visited by \emph{any} agent, i.e. when for some $t\geq0$ and $i\in \{0,1\}$ we have that $\tau_i(t)=I$. 
Equivalently, two executions of the trajectories are identical for different inputs up to the moment the first of these inputs is hit by a trajectory. 

For each $I \in \mathcal{D}$, we also denote the \emph{termination time} $T_i(I)$ of agent $A_i$ as the first time that agent $A_i$ visits target $I$.
The agent who is the first that visits the target is referred to as the \emph{finder}. We emphasize that each $T_i(I)$ depends on all $\tau_0,\tau_1$ and $I$ (while trajectories $\tau_j$ may only depend on $I$ after the target is visited). 
The objective of $\textsc{WS}_{\mathcal{D}}^{f}$ is to determine trajectories $\tau_0,\tau_1$ so as to minimize the \emph{search cost}
$$
\sup_{I \in \mathcal D} 
\tfrac{ f\left( T_0(I), T_1(I) \right) }
{f\left( \|I\|_2, \|I\|_2 \right) }.
$$
When $f$ is the $\max$ function, the problem is known as \emph{evacuation}, and when $f$ is the projection function, then it is known as \emph{priority evacuation}. For this reason, also refer to the objective also as the \emph{evacuation cost}. 
In this work we consider the following search domains: (a) $\mathcal{D} = \textsc{disk}$, the unit radius disk, and (b) $\mathcal{D} = n$-\textsc{gon}, the vertices of a regular $n$-gon inscribed in the unit radius disk.
\ignore{
\begin{enumerate}[(a)]
\item 
$\mathcal{D} = \textsc{disk}$, the unit radius disk 
\item $\mathcal{D} = n$-\textsc{gon}, the vertices of a regular $n$-gon inscribed in the unit radius disk 
\end{enumerate}
}

Some important observations are in place. 
\emph{First}, the underlying search space (i.e. the space where agents' movements take place) is the 2-dim Euclidean space. 
Searching in other $\ell_p$ metrics has been considered in~\cite{GLLKllp2023}. 
For convenience, we think of the search domains embedded on the Cartesian plane so that the unit disks are centered at the origin. The fact that the underlying search space is a metric space, and in particular the Euclidean metric space will be essential in our lower bound arguments. 
\emph{Second}, for a given $I \in \mathcal D$, let agent $A_i$ be the finder of the target, and suppose that this happens at time $t$, i.e. $I=\tau_i(t)$. Since the other agent can have her trajectory depend on this finding (wireless model), we may assume that $A_{1-i}$ moves directly to the target, i.e. that 
\begin{equation}
\label{equa: last jump}
T_{1-i}(I) = t+\norm{\tau_{i}(t) - \tau_{1-i}(t)}_2= t+\norm{I - \tau_{1-i}(t)}_2.
\end{equation}
\emph{Third}, for the search domains we consider, i.e. the disk or the vertices of $n$-gons, all possible targets are at distance $1$ from the origin. 
Agents that knew in advance the position of any target $I$ (i.e. if trajectories $\tau_i(t)$ could depend on $I$ for all $t\geq0$)
would need time $1$ to reach the target, inducing cost $f\left( \|I\|_2),\|I\|_2)  \right)= f(1,1)$, independent of the target $I$. 
Hence, when quantifying the performance of a search trajectory as per its search cost,
we perform both worst case and competitive analysis. In particular, both performance quantifications admit the same optimal trajectories, and the corresponding optimal search costs are off by constant multiplicative factor $f(1,1)$ that depends only on $f$, hence they are also equal when $f(1,1)=1$. 

\subsection{Past Results, Search Domains \& Cost Functions}
\label{sec: past results}

A number of past results can be described in the framework of $\textsc{WS}_{\mathcal{D}}^{f}$ problems as we demonstrate next. The discussion focuses on the wireless model. For other communication models, one has to adjust the definition of feasible trajectories. 

Typical search problems, where search is complete when the first agent reaches the target, are associated by definition with cost function $f(x,y)=\min\{x,y\}$. 
When multiple agents are involved in search, and one quantifies the performance by the time the last agent reaches the target, then one uses $f(x,y)=\|(x,y)\|_\infty=\max\{x,y\}$. The case where the performance is determined by the termination time of a designated agent, then one needs the projection function $f(x,y)=\textrm{proj}_2(x,y)=y$ (or $\textrm{proj}_1$). 
Finally, one may also consider a weighted average of the termination times of the two agents, by using cost function $f(x,y)=g_w(x,y):=w x + y$, where without loss of generality $w\in [0,1]$ (note that one may use a scaling factor $w+1$  without affecting optimizers). With these definitions in mind, we have that 
$
g_0 = \textrm{proj}_2
$
and $g_1 = \| \cdot \|_1$.

There are four notable examples of $\textsc{WS}_{\mathcal{D}}^{f}$ problems 
that were considered before. Our new findings build upon ideas found in these results, which we naturally generalize as well as in some cases we also improve:
\begin{itemize}
\item $\textsc{WS}_{\reals\setminus [-1,1]}^{g_w}$, also known as weighted group search on a line, was considered in~\cite{GLweightedLine2023} where the search space $\mathcal D$ was the one dimensional real line excluding all points at most $1$ away from the origin. The problem has been solved optimally. Our work is the first to consider the same cost function but on the disk.
\item $\textsc{WS}_{\textsc{disk}}^{\|\cdot\|_\infty}$ is the classic evacuation problem on the disk, solved optimally in~\cite{CzyzowiczGGKMP14}, where the cost function is simply described as the time that the last agent reaches the target.  
\item $\textsc{WS}_{\textsc{disk}}^{\textrm{proj}_2}$ is the priority 
evacuation problem (with 1 servant and 1 distinguished searcher, the queen) considered in~\cite{czyzowicz2020priority123}. In our current work, first we improve the lower bound for the problem, as well as we generalize it by considering cost functions 
$
g_w
$, where $w\in [0,1]$, and the same search domain. Recall that $
g_0 = \textrm{proj}_2
$.
\item \wsngon{\textrm{proj}_2}{6} was considered in~\cite{czyzowicz2020priority123} and solved optimally. Our methodology implies the same lower bound. Moreover, we are the first to study search domain $n$-\textsc{Gon}, with $n>6$, where in particular we derive lower bounds for \wsngon{g_w}{n}, $w\in [0,1]$, for $n=7,8,9$. 
\end{itemize}
The motivation for studying search domain $n$-\textsc{Gon} 
relates to the fact that a lower bound to the problem is also a lower bound to searching 
domain \textsc{disk} (for the same cost function), 
an idea introduced in~\cite{CzyzowiczGGKMP14}, and later used in~\cite{czyzowicz2020priority123}. 
The following lemma is explicit in these works. Even though it was previously used only for the $\|\cdot \|_\infty$ and $\textrm{proj}_2$ cost functions, it holds more generally for any non-decreasing cost function $f$. Here we generalize the statement, since we will need it when studying the weighted group search problem. We emphasize that the lemma establishes lower bounds to algorithms addressing the search of the disk, conditioning on their performance of searching $n$-gons. 

\begin{lemma}
\label{lem: ngone bound implies disk bound}
Let $t_0$, $t_1$ be the termination time lower bounds of agents $A_0,A_1$, respectively, for some input to \wsngon{f}{n}.
Then, no algorithm for \wsdisk{f} with these termination times of the agents
has evacuation cost better than
$
\frac{ f\left( t_0+\pi/n, t_1+\pi/n  \right) }
{f\left( 1,1 \right) }.
$
\end{lemma}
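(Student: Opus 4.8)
The plan is to reduce a lower bound for \wsngon{f}{n} to one for \wsdisk{f}, following the argument already used (for the special cost functions $\|\cdot\|_\infty$ and $\textrm{proj}_2$) in the cited works, and checking that it uses nothing about the objective beyond monotonicity of $f$. The crucial geometric fact is that an algorithm for the disk is strictly \emph{more constrained} than one for the $n$-gon: its agents must collectively visit the entire unit circle, not just the $n$ vertices, and in particular must sweep the length-$2\pi/n$ arc centered at each vertex, which a valid $n$-gon algorithm need not do. Each vertex target thus ``comes with'' a $\pi/n$ arc-tail that the $n$-gon algorithm can skip, and this is exactly the slack that produces the additive $\pi/n$.

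Concretely, I would take an arbitrary algorithm $\mathcal A$ for \wsdisk{f} with trajectories $\tau_0,\tau_1$ (well-defined and target-independent until the target is first visited), and build from it an algorithm $\mathcal A'$ for \wsngon{f}{n} whose agents follow $\tau_0,\tau_1$ but ``cut the arc-tail'': once an agent first reaches a point of the unit circle at arc-distance $\pi/n$ from a vertex $v$ (a Voronoi-boundary point between two consecutive vertices), it is within Euclidean distance at most $\pi/n$ of $v$, so --- using the wireless model and the last-jump identity~\eqref{equa: last jump} --- both agents can then be routed to any vertex of the $n$-gon in at most $\pi/n$ extra time. Because $\tau_0,\tau_1$ are target-independent up to that moment, $\mathcal A'$ is a legal online $n$-gon algorithm, and I would show that for each vertex $v$ there is a circle point $I_v$ (at arc-distance $\pi/n$ from $v$) with
\[
T_i^{\mathcal A'}(v)\ \le\ T_i^{\mathcal A}(I_v)-\tfrac{\pi}{n},\qquad i=0,1,
\]
the point being that $\mathcal A$, being correct for the whole disk, must also visit $I_v$.

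Then I would apply the hypothesis to $\mathcal A'$: there is an input vertex $v$ with $T_0^{\mathcal A'}(v)\ge t_0$ and $T_1^{\mathcal A'}(v)\ge t_1$. Combined with the displayed inequality, $\mathcal A$ has a circle target $I_v$, $\|I_v\|_2=1$, with $T_i^{\mathcal A}(I_v)\ge t_i+\pi/n$ for $i=0,1$. Since $f$ is non-decreasing in both coordinates, the search cost of $\mathcal A$ is at least $\frac{f(T_0^{\mathcal A}(I_v),T_1^{\mathcal A}(I_v))}{f(\|I_v\|_2,\|I_v\|_2)}\ge \frac{f(t_0+\pi/n,\,t_1+\pi/n)}{f(1,1)}$, which is the claimed bound; as $\mathcal A$ was arbitrary, this finishes the proof. (Here I also use the scale-invariance recorded right after~\eqref{equa: last jump}, which lets one pass freely between the competitive and worst-case cost at the price of the constant factor $f(1,1)$.)

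The hard part will be making the middle paragraph fully rigorous while harvesting the full $\pi/n$ for \emph{both} agents at once. A general disk algorithm need not sweep the circle monotonically, so ``$\pi/n$ before reaching $v$'' has to be phrased via first-visit times to the two endpoints of $v$'s Voronoi arc; an agent sitting at a Voronoi-boundary point cannot tell which of the two adjacent vertices is the target, so one must either exploit that in the wireless model the reveal happens only upon physically touching the target, or split the analysis according to the side from which the finder approaches; and the bound on the non-finder's saving is a short triangle-inequality estimate on the unit circle. All of this is essentially the content of the argument in~\cite{CzyzowiczGGKMP14,czyzowicz2020priority123} for the $\max$ and projection objectives, so the only genuinely new observation is that the argument is agnostic to the objective and needs only that $f$ is non-decreasing.
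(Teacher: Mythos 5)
There is a genuine gap, and it sits exactly where you flagged ``the hard part'': the middle inequality $T_i^{\mathcal A'}(v)\le T_i^{\mathcal A}(I_v)-\pi/n$ for \emph{both} $i=0,1$ is not just delicate, it is the wrong mechanism. In the disk instance with target $I_v$, the algorithm $\mathcal A$ stops the moment $I_v$ is hit; it never pays for the arc between $I_v$ and $v$, so there is no reason its termination times at $I_v$ exceed anything related to $v$ by $\pi/n$ (e.g.\ if the no-find trajectory heads straight to $v$ and reaches it at time $1$, the nearby Voronoi point is reached within an extra chord length $2\sin(\pi/2n)<\pi/n$, killing the inequality already for the finder). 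Moreover, the surgery you describe goes the wrong way: an $\mathcal A'$-agent that follows $\tau_i$ and then diverts from a Voronoi-boundary point to $v$ arrives \emph{later} than $\mathcal A$'s agent arrives at that boundary point, not $\pi/n$ earlier; and the claim that ``both agents can then be routed to any vertex in at most $\pi/n$ extra time'' is false for the non-finder, who may be anywhere in the disk --- \eqref{equa: last jump} charges it its full distance to the target, which the wireless model does not shrink. Finally, diverting an agent to one of the two equally plausible adjacent vertices breaks the target-independence of the simulated trajectories for the other potential targets, so $\mathcal A'$ as described is not even clearly a legal online algorithm.

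The missing idea --- which is in fact the entire content of the argument in the works you cite, rather than a trajectory-surgery reduction --- is a time snapshot together with an \emph{adversarial rotation} of the inscribed $n$-gon. Run an arbitrary algorithm for \wsdisk{f} until time $1+\pi/n-\epsilon$: since each agent needs time $1$ to reach the circle, the two of them have explored at most $2\pi/n-2\epsilon$ of its circumference, so some rotated copy of the regular $n$-gon has \emph{none} of its vertices visited yet. Restricting the disk algorithm to those $n$ vertices (all legitimate disk targets, and by rotational symmetry subject to the same $n$-gon lower bounds) yields an $n$-gon search in which the first vertex is visited no earlier than $1+\pi/n$, i.e.\ a copy of the $n$-gon problem delayed by $\pi/n$; hence some vertex forces termination times at least $t_0+\pi/n$ and $t_1+\pi/n$, and monotonicity of $f$ gives the bound $f(t_0+\pi/n,\,t_1+\pi/n)/f(1,1)$. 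This is where the additive $\pi/n$ comes from --- a global counting-plus-rotation argument applied to both agents at once --- and it cannot be harvested vertex-by-vertex from a fixed $n$-gon via arc-tails, which is what your $I_v$ construction attempts. Your outer shell (arbitrariness of $\mathcal A$, scale-invariance by $f(1,1)$, using only that $f$ is non-decreasing) is fine and matches the paper; it is the source of the $\pi/n$ slack that needs to be replaced.
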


\begin{proof}
We let an arbitrary algorithm for \wsdisk{f} run for time $1+\pi/n-\epsilon$. In this time, the 2 agents have searched at most $2\pi/n-2\epsilon$ of the disk. It is easy to see then that there is an inscribed regular $n$-gon, none of whose vertices have been explored yet. Note that \textsc{$n$-gon} is a subset of the search domain \textsc{disk}, and hence any target $I$ of \textsc{$n$-gon} is an eligible target of \textsc{disk}. For target $I$, the agents' termination times are increased by $\pi/n$, and the claim follows. 
\qed\end{proof}

Specifically all cost-functions $f \in \{\textrm{proj}_2, \|\cdot \|_\infty, g_w\}$ have the property that 
$
f(x+a,y+a)/f(1,1)=f(x,y)/f(1,1)+a.
$ 
Now in Lemma~\ref{lem: ngone bound implies disk bound}, let $c_n$ be the derived lower bound for \wsngon{f}{n} induced by target $I$. 
The same target $I$ is an eligible adversarial choice for \wsdisk{f}, hence for the latter problem, the target is reached at least $\pi/n$ time later. This means, the following quantity is a valid lower bound to \wsdisk{f}. 
\begin{equation}
\frac{ f\left( T_0(I)+\frac{\pi}n, T_1(I)+\frac{\pi}n  \right) }
{f\left( 1,1 \right) }
=
\frac{f\left( T_0(I),T_1(I)\right)}{f(1,1)} +\frac{\pi}n
= c_n +\frac{\pi}n
\label{equa: pseudo linear}
\end{equation}
In fact the best lower bound of $\pi/6+3+\sqrt3/2$ for \wsdisk{\textrm{proj}_2} is due to the provable lower bound of $3+\sqrt3/2$ for \wsngon{\textrm{proj}_2}{6}, see~\cite{czyzowicz2020priority123}. 
In this work we strengthen this result by deriving \textit{new lower bounds} for \wsngon{\textrm{proj}_2}{n}, $n>6$.

\section{Improved Framework for Proving Lower Bounds}
\label{sec: new framework}

In this section we leverage the existing framework for showing lower bounds for \wsngon{f}{n}, and in light of Lemma~\ref{lem: ngone bound implies disk bound}, that would imply adjusted lower bounds for \wsdisk{f}, too. In what follows $[n]$ denotes the set $\{0,1,\ldots, n\}$, so for example $[1]=\{0,1\}$. Moreover, we denote the set of all permutations of $[n]$ by $\mathcal R_n$.

Our first task is to provide a systematic way in order to find the optimal solution to \wsngon{f}{n}. At a high level that would be accomplished by solving $\frac{(n-1)!2^{n-1}}{n}$ many Non-Linear Programs. 
Recall that we think of the \textsc{$n$-gon} embedded on the Cartesian Euclidean space, inscribed in a unit-radius circle and centered at the origin. For this reason its vertices are identified by points 
$
\left(\coss{2i\pi/n}, \sinn{2i\pi/n}\right)$, $i=0,\ldots,n-1$.
Now, for a permutation $\rho \in \mathcal R_n$ (corresponding to the $n$ vertices of \textsc{$n$-gon}), we slightly abuse notation and we write $\rho_i$ to denote both the $i$-th element of the permutation, as well as the corresponding point $\left(\coss{2\rho_i \pi/n}, \sinn{2\rho_i \pi/n}\right)$ on the plane. 

The main idea of our formulation is that search strategies for \wsngon{f}{n} can be classified with respect to the order that vertices are visited, given by some permutation $\rho$ over $[n]$, and the identities of the agents that visit these vertices, given by a binary string $b\in [1]^n$. In particular, this means 
vertex $\rho_i$, i.e. the $i$-th visited vertex, is visited no later than vertex (target) $\rho_j$, when $i<j$, and that target $\rho_i$ is visited by agent $A_{b_i}$. We call such a search strategy a $(\rho,b)$-algorithm.  
Note that without loss of generality (due to symmetry), we may assume that the first vertex to be visited is vertex 0, and that the second visited vertex is one among $1, \ldots, \lfloor (n-1)/2 \rfloor +1$. This gives rise to at most $\frac{(n-1)!2^{n-1}}{n}$ classes of search algorithms. Next we show how to find the optimal $(\rho,b)$-algorithm, for each fixed $\rho,b$, by solving a Non-Linear Program.

\begin{lemma}
\label{lem: exact formulation}
For each $n \in \naturals$, permutation $\rho$ of $[n]$, binary string $b\in [1]^n$ and $t_0 \in \reals_{\geq 0}$, consider the Non-Linear Program\footnote{For the intended meaning of the variables, the reader may consult the proof of the lemma.}
\begin{align}
\min &~\max_{i \in [n]} 			
\left\{  
f\left(c^0_i,c^1_i\right)
\right\}   
\tag{$\textsc{NLP}^f_n(\rho,b,t_0)$}  \label{equa:nlp}   \\
s.t.: ~&
 t_n\geq t_{n-1}\geq \ldots \geq t_1 \geq t_0 \notag \\
& t_{i+1}-t_i \geq \norm{L^j_{i+1}- L^j_{i}}_2, ~~i\in [n-1], j\in [1] \notag \\
& c^j_i \geq t_i + \norm{L^j_{i}- L^{b_i}_{i}}_2, ~~i\in [n], j\in [1] \notag 
\end{align}
in variables $\{c^j_i, t_i, x_i,y_i\}_{j\in [1], i\in [n]}$, where in particular we used abbreviations 
$L^{b_i}_i =\rho_i$ and $L^{1-b_i}_i =(x_i,y_i)$. 
\ignore{
$$
L^j_i 
=
\left\{
\begin{array}{ll}
\rho_i		&, j=b_i \\
(x_i,y_i)	&, j=1-b_i
\end{array}
\right..
$$
}
Then, for every convex cost function $f$, \eqref{equa:nlp} admits a unique (hence global) minimum, 
and the optimizer for $t_0=1$ corresponds to the optimal $(\rho,b)$-algorithm for \wsngon{f}{n}.
\end{lemma}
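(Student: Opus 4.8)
The plan is to prove the two assertions separately: that $\textsc{NLP}^f_n(\rho,b,t_0)$ is a convex program (so a global minimum is attained, and, with more work, unique), and that for the choice $t_0=1$ feasible points of the program correspond to $(\rho,b)$-algorithms in a way that matches objective value with evacuation cost. First I would establish this correspondence, then check convexity directly, and finally tackle uniqueness, which I expect to be the delicate step.

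\textbf{Correspondence with $(\rho,b)$-algorithms.} Given a $(\rho,b)$-algorithm, let $t_i$ be the (first) time vertex $\rho_i$ is visited, let $L^j_i$ be the position of $A_j$ at time $t_i$, and set $c^j_i=T_j(\rho_i)$. The ordering $t_n\ge\cdots\ge t_0$ holds by definition of the class; the constraints $t_{i+1}-t_i\ge\|L^j_{i+1}-L^j_i\|_2$ hold because each agent has speed at most $1$; and, since the finder $A_{b_i}$ is at $\rho_i$ at time $t_i$ while the non-finder dashes straight to $\rho_i$ upon notification, equation~\eqref{equa: last jump} gives $c^{b_i}_i=t_i$ and $c^{1-b_i}_i=t_i+\|(x_i,y_i)-\rho_i\|_2$, so all $c$-constraints hold (with equality). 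The induced objective value $\max_i f(c^0_i,c^1_i)$ is exactly $\sup_I f(T_0(I),T_1(I))$ over the $n$ vertices, i.e. the algorithm's cost. Conversely, from a feasible point with $t_0=1$ I would build a $(\rho,b)$-algorithm: $A_{b_0}$ goes straight to $\rho_0$, arriving exactly at $t_0=1=\|\rho_0\|_2$, while $A_{1-b_0}$ moves to $(x_0,y_0)$; inductively, during $[t_i,t_{i+1}]$ each agent travels at unit speed from $L^j_i$ to $L^j_{i+1}$ (idling if the relevant difference constraint is slack, which it permits), so that $A_{b_i}$ is at $\rho_i$ at time $t_i$; the other agent then heads straight to $\rho_i$. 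Assuming all $c$-constraints tight (which one may do at an optimum since $f$ is non-decreasing), this algorithm has $T_j(\rho_i)=c^j_i$, hence cost equal to the objective value. Modulo the routine caveat that one may assume the non-finder never reaches a not-yet-found vertex ahead of schedule (otherwise the algorithm lies in a different class $(\rho',b')$, handled by the other NLPs), the two directions give matching inequalities, so the program's optimal value equals the optimal $(\rho,b)$-algorithm cost and its optimizer encodes that algorithm.

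\textbf{Convexity and attainment.} The objective is a pointwise maximum of the maps $(c^0_i,c^1_i)\mapsto f(c^0_i,c^1_i)$, each convex because $f$ is convex, hence convex. Each movement constraint has the form (affine) $-$ (Euclidean norm of an affine function of the variables) $\ge 0$, i.e. (concave) $\ge 0$, cutting out a convex set, and likewise for $c^j_i-t_i-\|(x_i,y_i)-\rho_i\|_2\ge 0$ (recall $L^{b_i}_i=\rho_i$ is constant); the $t$-chain is linear. Thus the program is convex, the feasible region is nonempty (any concrete $(\rho,b)$-algorithm gives a point), and restricting without loss to the region where all $c$-constraints are tight and no agent strays beyond its movement budget makes the sublevel sets compact, so the minimum is attained — and by convexity every local minimum is global.

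\textbf{Uniqueness.} Convexity gives only that the minimizers form a convex set; collapsing it to a point is the heart of the matter. I would argue at an optimal solution that every $c$-constraint and, for each $i$, the binding (slower-agent) difference constraint must be tight, and then peel variables off by induction on $i$: with $t_0=1$ fixed, these tightness conditions together with the impossibility of lowering $\max_i f(c^0_i,c^1_i)$ pin down $t_1,(x_1,y_1),t_2,(x_2,y_2),\dots$ in turn. For strictly convex $f$ this takes a clean form — along a segment of co-optimal points the objective is constant, which forces the coordinates of every maximizing index to be constant along the segment, and this then propagates through the tight movement constraints — whereas for the merely-convex cost functions of interest, $g_w$ and $\|\cdot\|_\infty$, the structural tightness facts must substitute for strict convexity. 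I expect this last bookkeeping, namely excluding a nontrivial continuum of equally good trajectories, to be the main obstacle; the correspondence and the convexity are comparatively mechanical.
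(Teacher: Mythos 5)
Your modeling correspondence and convexity arguments are essentially the paper's proof: the paper assigns the same semantics to $c^j_i$, $t_i$, $(x_i,y_i)$, invokes~\eqref{equa: last jump} for the non-finder's constraint, notes that monotonicity of $f$ makes the $c$-constraints tight at an optimum, and concludes convexity exactly as you do (norm constraints cut out convex sets, $\max\circ f$ is convex, the objective is bounded below by $f(t_0,t_0)$). If anything you are more careful than the paper on the converse direction, i.e.\ turning a feasible point back into trajectories and flagging the possibility that the reconstructed schedule accidentally visits a later vertex early; the paper only argues informally that ``the constraints capture the $(\rho,b)$-algorithm visitation times,'' which suffices for how the lemma is used downstream (only the inequality $\textsc{NLP}\le$ optimal $(\rho,b)$-cost feeds into the relaxation chain of Lemmas~\ref{lem: lb to n-gone} and~\ref{lem: lb to disk}).

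Where you diverge is the word ``unique.'' You read it as uniqueness of the \emph{optimizer} and correctly identify that convexity alone cannot deliver this for $f\in\{\textrm{proj}_2,\|\cdot\|_\infty,g_w\}$ --- but the paper neither proves nor needs that. Its proof only asserts that the program, being convex and bounded below, attains an optimum and that any local minimum is global, i.e.\ the minimum \emph{value} is unambiguous; that is all that is used later. So the bookkeeping you anticipate as ``the heart of the matter'' can be dropped: indeed, strict uniqueness of the minimizer is generally false here, since for any index $i$ at which $f(c^0_i,c^1_i)$ is strictly below the maximum one can typically perturb the non-finder's position $(x_i,y_i)$ while keeping all speed constraints slack, producing a continuum of optimal feasible points with the same objective value. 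Reinterpreting the claim in the paper's (weaker) sense, your proposal is complete and, apart from the finer converse construction, follows the same route as the paper.
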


\begin{proof}
It is easy to see that the domain of the Non-Linear Program is convex. 
Indeed, any constraint of the form $\|\|_2 \leq c$ is convex, hence the intersection of convex constraints define a convex feasible region. 
As for the objective $f$ is given to be convex, while $\max\{\cdot\}$ is convex too, hence also their composition. 
This makes the Non-Linear Program convex. 
Note also that the non-linear program is bounded from below  by $f\left(t_0t_0\right)$, hence it admits an optimal solution, which is the only (global) minimum. 

Next we argue that for the fixed $\rho,b$, and by setting $t_0=1$, the Non-Linear Program correctly finds the termination cost of the optimal $(\rho,b)$-algorithm for \wsngon{f}{n} (scaled by $f(1,1)$). 
For this, we present the semantics of the chosen variables that identify the behavior of the $(\rho,b)$-algorithm. \\
- $c^j_i$ is the time by which agent $j$ reaches vertex $\rho_i$, $j\in[1], i \in [n]$. \\
- $t_i$ is the time by which vertex $\rho_i$ is visited for the first time, by any agent, $i \in [n]$.\\
- $x_i, y_i$ are coordinates of agent $j =1- b_i$, i.e. the agent that does not visit $\rho(i)$ first, $i \in [n]$.

Therefore, indeed, if $L^j_i$ denotes the position of agent $A_j$ when potential target $\rho_j$ is visited
(which are subject to an algorithmic choice), we have that 
\begin{align*}
L^{b_i}_i &=\rho_i, \\
L^{1-b_i}_i &=(x_i,y_i),
\end{align*}
for each $i$. 
By~\eqref{equa: last jump}, we obtain 
$$c^j_i \geq t_i + \norm{L^j_{i}- L^{b_i}_{i}}_2,$$
i.e. a bound for agent $j$ to visit vertex $\rho_i$. 
Since $f$ is non-decreasing in each coordinate, this constraint will be satisfied with equality for the optimal solution (hence we could also eliminate variables $c^j_i$), but we leave the constraint as is for the sake of better exposition. 

Clearly, because the agents are moving at speed at most 1, the time elapsed between the visitation of $\rho_{i+1}$ and $\rho_i$ cannot be less than the time needed by each agent $A_j$ to move from $L^j_{i}$ to $L^j_{i+1}$, i.e. no more than $\norm{L^j_{i+1}- L^j_{i}}_2$. Finally, the time that the first vertex is visited is at least $t_0=1$, overall showing that the constraints capture the $(\rho,b)$-algorithm visitation times of all potential targets. 

Finally, as per the definition the search cost,
and due to that in \wsngon{f}{n} there are only $n$ many potential targets, one needs to compute the search cost for each potential target $\rho_i$, which equals $f\left(c^0_i,c^1_i\right)/f(1,1)$, only that in the description of the non-linear program we omit the multiplicative factor $f(1,1)$. 
\qed\end{proof}

We slightly abuse notation, and we denote by $\textsc{NLP}^f_n(\rho,b,t_0)$ also the optimal value of the same Non-Linear Program. 
The following corollary follows immediately by our definitions, and Lemma~\ref{lem: exact formulation}. 
\begin{corollary}
\label{cor: optimal n-gone}
The optimal search cost for \wsngon{f}{n} equals
$
\frac{1}{f(1,1)} 
\min_{\rho \in \mathcal R_n, b \in [1]^n }
\textsc{NLP}^f_n\left( \rho,b,1 \right).
$
\end{corollary}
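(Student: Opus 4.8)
The plan is to show the two inequalities implicit in the claimed equality, using Lemma~\ref{lem:~exact formulation} as the workhorse. First I would argue the "$\le$" direction: fix any permutation $\rho\in\mathcal R_n$ and any binary string $b\in[1]^n$, and let $(\tau_0,\tau_1)$ be the optimal $(\rho,b)$-algorithm, whose existence and cost are certified by Lemma~\ref{lem:~exact formulation} to be exactly $\textsc{NLP}^f_n(\rho,b,1)/f(1,1)$. Since this is a concrete feasible pair of trajectories for \wsngon{f}{n}, the optimal search cost is at most $\textsc{NLP}^f_n(\rho,b,1)/f(1,1)$; taking the minimum over all $(\rho,b)$ (a finite set, so the minimum is attained) gives the bound. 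Here I would also recall the normalization remark from Section~\ref{sec:~problem definition notation}: every target of \textsc{$n$-gon} lies at distance $1$ from the origin, so $f(\|I\|_2,\|I\|_2)=f(1,1)$ for all $I$, which is why dividing the $\max$ in the objective of \eqref{equa:nlp} by $f(1,1)$ yields precisely the search cost as defined.

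For the "$\ge$" direction I would take an arbitrary feasible solution $(\tau_0,\tau_1)$ to \wsngon{f}{n} and extract a pair $(\rho,b)$ that it realizes. The $n$ potential targets are the vertices $\rho'$ of the inscribed $n$-gon; order them by first visitation time $t_i := \min\{t : \tau_0(t)=\rho'_i \text{ or } \tau_1(t)=\rho'_i\}$, breaking ties arbitrarily, to obtain a permutation $\rho$; and for each vertex let $b_i$ be the identity of an agent that achieves this first visitation. By construction $(\tau_0,\tau_1)$ is a $(\rho,b)$-algorithm for this $(\rho,b)$, and — crucially — the online constraint (trajectories agree across inputs until the first is hit) together with the unit-speed constraint means the induced visitation times $t_i$ and the positions $L^j_i$ of the two agents at the moment vertex $\rho_i$ is first reached satisfy exactly the constraints of \eqref{equa:nlp} with $t_0$ replaced by the actual first-visitation time $t_0\ge 1$; the value $\max_i f(c^0_i,c^1_i)$ (with $c^j_i=t_i+\|L^j_i-L^{b_i}_i\|_2$, which by \eqref{equa: last jump} upper bounds the true termination times once the target is found) is at least $f(1,1)$ times the search cost of $(\tau_0,\tau_1)$. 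Hence the search cost of $(\tau_0,\tau_1)$ is at least $\textsc{NLP}^f_n(\rho,b,t_0)/f(1,1)\ge \textsc{NLP}^f_n(\rho,b,1)/f(1,1)$, the last inequality because enlarging $t_0$ only enlarges the feasible region's lower-shift and thus cannot decrease the optimal value (monotonicity of the NLP in $t_0$, itself immediate from the chain $t_n\ge\cdots\ge t_1\ge t_0$ and monotonicity of $f$). Taking the infimum over all feasible $(\tau_0,\tau_1)$ gives the reverse bound.

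I expect the only delicate point to be the bookkeeping in the "$\ge$" direction: making precise that an arbitrary algorithm, with its possibly complicated interior trajectories and tie-breaking among simultaneous visitations, genuinely falls into one of the finitely many $(\rho,b)$ classes and that its cost is faithfully lower-bounded by the corresponding NLP value — in particular the monotonicity $\textsc{NLP}^f_n(\rho,b,t_0)\ge\textsc{NLP}^f_n(\rho,b,1)$ for $t_0\ge1$, and the reduction from "termination time of the non-finder" to the jump bound of \eqref{equa: last jump}. Both facts are already essentially contained in the proof of Lemma~\ref{lem:~exact formulation}, so once those are cited the corollary follows by combining the two inequalities with the observation that $\mathcal R_n\times[1]^n$ is finite, making $\min$ rather than $\inf$ legitimate on the right-hand side.
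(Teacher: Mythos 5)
Your proposal is correct and takes essentially the same route as the paper, which treats the corollary as immediate from the definitions plus Lemma~\ref{lem: exact formulation}: every $(\rho,b)$-algorithm yields a feasible search strategy of cost $\textsc{NLP}^f_n(\rho,b,1)/f(1,1)$, and every algorithm for \wsngon{f}{n} falls into one of the finitely many $(\rho,b)$ classes, whose NLP value lower-bounds its cost (your $t_0$-monotonicity detour is fine but not even needed, since the induced times are already feasible for $t_0=1$). One cosmetic slip: under the normalization~\eqref{equa: last jump}, $c^j_i=t_i+\norm{L^j_i-L^{b_i}_i}_2$ \emph{equals} (rather than merely upper-bounds) the termination times, which is what makes your chain of inequalities in the ``$\ge$'' direction close properly.
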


Note that any solution to~\eqref{equa:nlp} is associated with an embedding of points $L_i^j$ in the $(\reals^2, \ell_2)$ metric space, where in particular 
$
d(L_i^j, L_{i'}^{j'}) = \norm{L_i^j - L_{i'}^{j'}}_2.
$
This metric space satisfies the triangle inequality. Therefore, requiring that distances 
$
d_{j,i,j',i'}:=d(L_i^j, L_{i'}^{j'})
$
satisfy the triangle inequality, but not necessarily that the space is embeddable to $\reals^2$ (or even that it is $\ell_2$), gives rise to a natural relaxation to the problem. This idea is materialized in the next lemma.

\begin{lemma}
\label{lem: lb to n-gone}
For each $n \in \naturals$, permutation $\rho$ of $[n]$, $b\in [1]^n$ and $t_0 \in \reals_{\geq 0}$, consider the Non-Linear Program (Relaxation)
\begin{align}
\min &~\max_{i \in [n]} 			
\left\{  
f\left(c^0_i,c^1_i\right)
\right\}   
\tag{$\textsc{REL}^f_n(\rho,b,t_0)$}  \label{equa:rel}   \\
s.t.: ~&
 t_n\geq t_{n-1}\geq \ldots \geq t_1 \geq t_0 \notag \\
& t_{i+1}-t_i \geq d_{j,i+1,j,i}, ~~i\in [n-1], j\in [1] \notag \\
& c^j_i \geq t_i + d_{j,i,b_i,i}, ~~i\in [n], j\in [1] \notag \\
& d_{b_i,i,b_{i'},i'} = \norm{ \rho_i - \rho_{i'}}_2, ~~i,i' \in [n] \notag \\ 
& \left(\left\{L^j_i\right\}_{j\in [1], i\in [n]}, d(L^j_i,L^{j'}_{i'}))=d_{j,i,j',i'} 
\right) ~\textrm{is a metric} \notag
\end{align}
in variables 
$\{c^j_i, t_i, d_{j,i,j',i'}\}_{j,j'\in [1], i,i'\in [n]}$.
Then,~\eqref{equa:rel} admits a unique (global) minimum, that we denote by $\textsc{REL}^f_n(\rho,b,t_0)$, and in particular
$
\textsc{NLP}^f_n(\rho,b,t_0)
\geq
\textsc{REL}^f_n(\rho,b,t_0).
$
\end{lemma}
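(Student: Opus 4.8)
The plan is to show that \eqref{equa:rel} is a relaxation of \eqref{equa:nlp} in the literal sense: every feasible point of the NLP yields a feasible point of the REL with the same objective value, so the infimum over the larger feasible set cannot exceed the one over the smaller set. First I would argue existence and uniqueness of the minimizer of \eqref{equa:rel} exactly as in Lemma~\ref{lem: exact formulation}: the constraints of \eqref{equa:rel} are either linear (the ordering constraints, the $t_{i+1}-t_i \geq d_{j,i+1,j,i}$ constraints, the $c^j_i \geq t_i + d_{j,i,b_i,i}$ constraints, and the equalities $d_{b_i,i,b_{i'},i'} = \norm{\rho_i-\rho_{i'}}_2$) or are the triangle-inequality constraints $d_{j,i,j',i'} \leq d_{j,i,k,\ell} + d_{k,\ell,j',i'}$ together with symmetry and nonnegativity, all of which are linear in the variables $d_{j,i,j',i'}$; hence the feasible region is a (closed) polyhedron, in fact convex. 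The objective $\max_{i}\{f(c^0_i,c^1_i)\}$ is convex when $f$ is convex, and it is bounded below by $f(t_0,t_0)$ via the $c^j_i \geq t_i + d_{j,i,b_i,i} \geq t_0$ bounds (using nonnegativity of the $d$'s and monotonicity of $f$), so a minimizer exists; uniqueness follows from strict convexity arguments identical to those invoked for \eqref{equa:nlp}, or one may simply speak of "a" global minimum, the optimal value being what matters for the inequality.

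Next I would establish the inequality $\textsc{NLP}^f_n(\rho,b,t_0) \geq \textsc{REL}^f_n(\rho,b,t_0)$. Take any feasible point $\{c^j_i, t_i, x_i, y_i\}$ of \eqref{equa:nlp}, and with $L^{b_i}_i = \rho_i$, $L^{1-b_i}_i = (x_i,y_i)$ as in the statement, define $d_{j,i,j',i'} := \norm{L^j_i - L^{j'}_{i'}}_2$. I then check that $(\{c^j_i\}, \{t_i\}, \{d_{j,i,j',i'}\})$ is feasible for \eqref{equa:rel}: the ordering constraints on the $t_i$ are untouched; the constraints $t_{i+1}-t_i \geq \norm{L^j_{i+1}-L^j_i}_2$ and $c^j_i \geq t_i + \norm{L^j_i - L^{b_i}_i}_2$ become exactly $t_{i+1}-t_i \geq d_{j,i+1,j,i}$ and $c^j_i \geq t_i + d_{j,i,b_i,i}$; the equality $d_{b_i,i,b_{i'},i'} = \norm{L^{b_i}_i - L^{b_{i'}}_{i'}}_2 = \norm{\rho_i - \rho_{i'}}_2$ holds by definition of the $L$'s; and $(\{L^j_i\}, d)$ is a metric because the $d_{j,i,j',i'}$ are genuine Euclidean distances, which satisfy symmetry, nonnegativity and the triangle inequality. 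Since the objective value $\max_i\{f(c^0_i,c^1_i)\}$ is the same for this point in both programs, the optimal value of \eqref{equa:rel} is at most that of \eqref{equa:nlp}. Taking the infimum over all feasible points of \eqref{equa:nlp} gives $\textsc{NLP}^f_n(\rho,b,t_0) \geq \textsc{REL}^f_n(\rho,b,t_0)$.

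I do not expect a serious obstacle here: the content is simply that dropping the Euclidean-realizability requirement on a distance matrix and keeping only the metric axioms enlarges the feasible set, and that the map from NLP solutions to REL solutions preserves both feasibility and objective value. The one point that needs a little care is making precise what "$(\{L^j_i\}, d) \text{ is a metric}$" means as a system of constraints — namely symmetry $d_{j,i,j',i'} = d_{j',i',j,i}$, nonnegativity with $d_{j,i,j,i}=0$ (and, for the strict form, $d_{j,i,j',i'}>0$ when $(j,i)\neq(j',i')$, though one should be cautious since two distinct index pairs could legitimately be assigned the same plane point; using a pseudometric or allowing zero distances is the cleaner choice and does not affect the bound), and the triangle inequality over all triples of index pairs — and to confirm that Euclidean distances satisfy all of these, which is standard. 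A secondary, purely expository point is that uniqueness of the minimizer is not actually needed for the stated inequality; it suffices that the optimal value is well-defined, so if strict convexity of $f$ is not assumed one should phrase the lemma accordingly. With these caveats noted, the proof is a direct relaxation argument.
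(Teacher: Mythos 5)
Your proposal is correct and follows essentially the same route as the paper: observe that the metric axioms are linear constraints so the feasible region of \eqref{equa:rel} is a polyhedron with objective bounded below by $f(t_0,t_0)$, and then note that any feasible point of \eqref{equa:nlp} induces, via the Euclidean distances $d_{j,i,j',i'}=\norm{L^j_i-L^{j'}_{i'}}_2$, a feasible point of \eqref{equa:rel} with the same objective value, yielding the inequality. Your added remarks --- that uniqueness of the minimizer is not needed for the inequality and that coinciding points force a pseudometric reading of the metric constraint --- are sound refinements of the same argument rather than a different approach.
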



\begin{proof}
First we observe that the feasible region of~\eqref{equa:rel} is a polyhedron. 
Indeed, for all $i,i' \in [n]$, we have that $\norm{ \rho_i - \rho_{i'}}_2$ is a constant, while the axioms that distance function $d_{j,i,j',i'}$ defines a metric for points $\{L^j_i\}_{j\in [1], i\in [n]}$ is a collection of linear constraints (including the triangle inequalities).
It is also easy to see that the objective is bounded from below by $f(t_0,t_0)$, hence, the Non-Linear Pogram admits a unique local (global) minimizer. 

Finally, we show that~\eqref{equa:rel} is a relaxation to~\eqref{equa:nlp}, implying the claimed inequality. 
To see why, note that the constraints of~\eqref{equa:nlp} require that  
$$
\left(\left\{ L^j_i\right\}_{j\in [1], i\in [n]}, d(L^j_i,L^{j'}_{i'}))=d_{j,i,j',i'} 
\right)
$$
is embeddable in $(\reals^2, \ell_2)$, and therefore the distance function 
$
d_{j,i,j',i'}=\norm{L^j_i-L^{j'}_{i'}}_2
$
is a metric. 
\qed\end{proof}

The significance of~\eqref{equa:rel} over~\eqref{equa:nlp} is that the former can be solved much faster, especially when $f$ is a linear function (e.g. when $f\in \{\textrm{proj}_2, \|\cdot \|_\infty, g_w\}$), in which case the resulting relaxation is a Linear Program (a basic trick introduces a new linear variable, and a collection of inequality constraints that simulate that the optimal solution simulates the $\max$ function). 
Because~\eqref{equa:rel} is easy to solve, it will be used to provide search cost lower bounds to \wsdisk{f}, as the next lemma suggests. 

\begin{lemma}
\label{lem: lb to disk}
For every $n \in \naturals$, 
no algorithm to \wsdisk{f} has search cost better than 
$$
\mathcal L_n^f 
:=
\frac{1}{f(1,1)} \min_{\rho \in \mathcal R_n, b \in [1]^n }
\textsc{REL}^f_n\left( \rho,b,1+\frac{\pi}n \right).
$$
\end{lemma}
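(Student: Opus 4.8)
The plan is to reduce an arbitrary algorithm $\mathcal A$ for \wsdisk{f}, run for a bounded time, to a $(\rho,b)$-algorithm on a conveniently placed inscribed regular $n$-gon whose vertices are all reached only ``late'', and then to invoke the relaxation bound of Lemma~\ref{lem: lb to n-gone}. So fix such an $\mathcal A$ and a small $\epsilon>0$. Reasoning exactly as in the proof of Lemma~\ref{lem: ngone bound implies disk bound}---run $\mathcal A$ for time $1+\pi/n-\epsilon$, during which the two unit-speed agents can have explored a portion of the unit circle of total arclength at most $2\pi/n-2\epsilon<2\pi/n$, and average over the $(2\pi/n)$-periodic family of inscribed regular $n$-gons---one obtains an inscribed regular $n$-gon $G$, none of whose vertices has been visited by time $1+\pi/n-\epsilon$. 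After a rotation (which changes nothing, by rotational symmetry of the problem) we identify the vertices of $G$ with those of the standard inscribed $n$-gon. If $\mathcal A$ never visits some vertex of $G$ then its search cost is infinite and we are done, so assume otherwise.

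I would then read off a $(\rho,b)$-instance from the \emph{default} (target-independent) trajectory of $\mathcal A$, i.e.\ the trajectory it follows before any dependence on the target is triggered (equivalently, the one it follows for a target it never reaches, which exists since the trajectories sweep a null subset of the disk). Let $t_0\le t_1\le\dots$ be the first-visit times of the vertices of $G$ along this trajectory, $\rho_i$ the $i$-th such vertex, $b_i\in\{0,1\}$ the identity of the agent that first visits it, and $L^j_i$ the position of agent $A_j$ at time $t_i$ (so $L^{b_i}_i=\rho_i$); ties and simultaneous co-visitations are removed by an infinitesimal rotation of $G$. Since no vertex of $G$ is touched before $1+\pi/n-\epsilon$ we have $t_0\ge 1+\pi/n-\epsilon$, and since agents move at speed at most $1$ we have $t_{i+1}-t_i\ge\norm{L^j_{i+1}-L^j_i}_2$ for all $i,j$. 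These $\rho,b,t_i,L^j_i$ are precisely the objects appearing in $\textsc{NLP}^f_n(\rho,b,t_0)$.

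The crux is to turn each vertex of $G$ into an adversarial target. If the target is $\rho_k$, then since a feasible wireless trajectory may depend on the target only after it has been visited, the execution of $\mathcal A$ coincides with the default trajectory up to time $t_k$ (the first time $\rho_k$ is visited). Hence the finder satisfies $T_{b_k}(\rho_k)=t_k$, while the other agent is at $L^{1-b_k}_k$ at time $t_k$ and has not reached $\rho_k$ before then, so, moving at speed at most $1$, $T_{1-b_k}(\rho_k)\ge t_k+\norm{L^{1-b_k}_k-\rho_k}_2$. Setting $c^j_k:=T_j(\rho_k)$, the tuple $(\{c^j_i\},\{t_i\},\{L^j_i\})$ is then a feasible point of $\textsc{NLP}^f_n(\rho,b,t_0)$. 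Since every vertex $\rho_k$ of $G$ is a legal target for \wsdisk{f}, by Lemma~\ref{lem: lb to n-gone}, and then by the (evident) monotonicity of $\textsc{REL}^f_n(\rho,b,\cdot)$ together with $t_0\ge 1+\pi/n-\epsilon$, the search cost of $\mathcal A$---a supremum that includes all vertices of $G$---is at least
\begin{align*}
\frac{1}{f(1,1)}\max_k f(c^0_k,c^1_k)
&\ge \frac{1}{f(1,1)}\,\textsc{NLP}^f_n(\rho,b,t_0)\\
&\ge \frac{1}{f(1,1)}\,\textsc{REL}^f_n(\rho,b,t_0)
\ge \frac{1}{f(1,1)}\,\textsc{REL}^f_n(\rho,b,1+\tfrac{\pi}{n}-\epsilon)\\
&\ge \frac{1}{f(1,1)}\,\min_{\rho',\,b'}\,\textsc{REL}^f_n(\rho',b',1+\tfrac{\pi}{n}-\epsilon),
\end{align*}
where the last minimum is over all $\rho'\in\mathcal R_n$ and $b'\in[1]^n$.

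Finally I would let $\epsilon\downarrow0$. For fixed $\rho,b$ the value $\textsc{REL}^f_n(\rho,b,t_0)$ is the optimal value of a convex program (a linear program when $f$ is linear) in which the parameter $t_0$ appears only in the right-hand sides of the constraints $t_1\ge t_0$ and $c^j_0\ge t_0+\dots$, hence it is a nondecreasing, continuous function of $t_0$; so $\min_{\rho',b'}\textsc{REL}^f_n(\rho',b',1+\pi/n-\epsilon)\uparrow\min_{\rho',b'}\textsc{REL}^f_n(\rho',b',1+\pi/n)=f(1,1)\,\mathcal L_n^f$ as $\epsilon\downarrow0$. Since the displayed inequality holds for every $\epsilon>0$, the search cost of $\mathcal A$ is at least $\mathcal L_n^f$. \textbf{The main obstacle} is the crux step: rigorously invoking the online/wireless semantics of the problem to certify that the default positions $L^j_i$, together with the termination times they force, really constitute a feasible point of $\textsc{NLP}^f_n$---in particular the inequality $T_{1-b_k}(\rho_k)\ge t_k+\norm{L^{1-b_k}_k-\rho_k}_2$ and the genericity/tie-breaking bookkeeping. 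By contrast, the existence of $G$ is borrowed verbatim from Lemma~\ref{lem: ngone bound implies disk bound}, and the $\epsilon\to0$ passage needs only routine continuity of a parametric optimization value.
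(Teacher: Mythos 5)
Your proof is correct and takes essentially the same route as the paper: run the disk algorithm for time $1+\pi/n-\epsilon$ to obtain an unexplored inscribed $n$-gon, read off the visit order and agent identities as a feasible point of $\textsc{NLP}^f_n(\rho,b,\cdot)$ with shifted start time, and relax to $\textsc{REL}^f_n$ via Lemma~\ref{lem: lb to n-gone}. The paper's proof merely composes Corollary~\ref{cor: optimal n-gone} with Lemmas~\ref{lem: ngone bound implies disk bound} and~\ref{lem: lb to n-gone}, whereas you inline those ingredients and handle the $\epsilon\to 0$ passage explicitly via monotonicity/continuity of the relaxation value in $t_0$; the substance is identical.
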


\begin{proof}
By Corollary~\ref{cor: optimal n-gone} and Lemma~\ref{lem: ngone bound implies disk bound}, and for all $n\in \naturals$ we have that 
any solution to \wsdisk{f} has cost at least 
$
\frac{1}{f(1,1)} 
\min_{\rho \in \mathcal R_n, b \in [1]^n }
\textsc{NLP}^f_n\left( \rho,b,1+\frac{\pi}n \right)
\geq 
\frac{1}{f(1,1)}
\min_{\rho \in \mathcal R_n, b \in [1]^n }
\textsc{REL}^f_n\left( \rho,b,1+\frac{\pi}n \right),
$
where the last inequality is due to Lemma~\ref{lem: lb to n-gone}. 
\qed\end{proof}

Lemma~\ref{lem: lb to disk} effectively implies that for each $n$, one can solve 
$(n-1)!2^{n-1}/n$ Convex Programs (or more specifically Linear Programs, if $f$ is linear), in order to obtain lower bound $\mathcal L_n^f$ for \wsdisk{f}. This idea is explored in Section~\ref{sec: implied LB priority evacuation} specifically for $f=\textrm{proj}_2$, and more generally in Section~\ref{sec: arithmetic weighted disk} for cost functions $f=g_w$, $w\in [0,1]$.

\section{Implied (and Improved) Lower Bounds for Priority Evacuation}
\label{sec: implied LB priority evacuation}

In this section we present and prove one of our main contributions, which is an improved lower bound to the Priority Evacuation  Problem  \wsdisk{\textrm{proj}_2}. 

\begin{theorem}
\label{thm: new priority lb}
No algorithm for \wsdisk{\textrm{proj}_2} has evacuation cost less than 
$
1+\pi/9 + \sqrt3/2 + \coss{\pi/18} + 4\sinn{\pi/9} \approx 4.56798.
$
\end{theorem}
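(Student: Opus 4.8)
The plan is to apply Lemma~\ref{lem: lb to disk} with $f = \textrm{proj}_2$ and $n = 9$, so that the target quantity becomes $\mathcal{L}_9^{\textrm{proj}_2} = \min_{\rho \in \mathcal R_9, b \in [1]^9} \textsc{REL}^{\textrm{proj}_2}_9(\rho, b, 1 + \pi/9)$ (note $f(1,1) = 1$ for the projection function). Since the $9$-gon has symmetry, by the reduction noted before Lemma~\ref{lem: exact formulation} we may fix $\rho_0 = 0$ and restrict $\rho_1 \in \{1,2,3,4,5\}$, leaving $8! \cdot 2^9 / 9$ linear programs. Because $\textrm{proj}_2$ is linear, each $\textsc{REL}$ instance is a genuine LP: introduce a scalar variable $M$ with constraints $M \geq c_i^1$ for all $i \in [9]$ (the objective only sees the second coordinate $c_i^1$, i.e. the termination time of the distinguished agent $A_1$), and the metric/triangle-inequality constraints on the distances $d_{j,i,j',i'}$ together with the fixed constraints $d_{b_i,i,b_{i'},i'} = \|\rho_i - \rho_{i'}\|_2$. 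The main work is to solve all these LPs — I would do this symbolically (computer-assisted, as the paper hints) to get exact algebraic optima rather than floating point, then identify the permutation/assignment pair $(\rho^\star, b^\star)$ achieving the minimum and verify that its optimal value equals exactly $\sqrt{3}/2 + \coss{\pi/18} + 4\sinn{\pi/9}$, so that adding the $1 + \pi/9$ offset baked into the $t_0$ argument gives the claimed $1 + \pi/9 + \sqrt{3}/2 + \coss{\pi/18} + 4\sinn{\pi/9} \approx 4.56798$.

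For the write-up I would not present all LPs; instead I would exhibit the optimal one and certify it two ways. For the \emph{upper bound on $\mathcal{L}_9$} (i.e. that the minimum is no larger than the claimed value) it suffices to produce, for the winning $(\rho^\star, b^\star)$, an explicit feasible solution of $\textsc{REL}^{\textrm{proj}_2}_9(\rho^\star, b^\star, 1 + \pi/9)$ — concretely an actual planar embedding of the points $L_i^j$ (which automatically satisfies the metric axioms) together with the induced $t_i$ and $c_i^j$ — whose objective equals the target; this is just arithmetic/trigonometry on a specific configuration of $9$-gon vertices and interior points. For the \emph{lower bound} (that no $(\rho, b)$ does better), the clean route is LP duality: for the winning instance exhibit a dual feasible solution with matching objective, and for every other $(\rho, b)$ argue — again via an explicit dual feasible point, or by a structural domination argument that reduces the case analysis — that the optimum is at least the target value. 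The chief obstacle is precisely this second part: there are on the order of $2.3$ million LPs, and one needs either a genuinely exhaustive symbolic computation or a clever combinatorial pruning (monotonicity in which agent handles which vertex, symmetry among the non-distinguished agent's role, and the observation that visiting-order permutations which "waste" the servant far from later targets are dominated) to cut the cases down to a humanly- or machine-verifiable set.

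I expect the presentation to take the following shape. \textbf{Step 1:} invoke Lemma~\ref{lem: lb to disk} to reduce to computing $\mathcal{L}_9^{\textrm{proj}_2}$. \textbf{Step 2:} rewrite each $\textsc{REL}^{\textrm{proj}_2}_9(\rho, b, 1+\pi/9)$ explicitly as an LP in the distance variables $d_{j,i,j',i'}$, the visit-times $t_i$, the arrival-times $c_i^j$, and the makespan surrogate $M$, recording the fixed distances between $9$-gon vertices (which are $2\sinn{k\pi/9}$, $k = 1,\dots,4$). \textbf{Step 3:} state the optimal pair $(\rho^\star, b^\star)$ and the optimal configuration, deriving the value $\sqrt3/2 + \coss{\pi/18} + 4\sinn{\pi/9}$ from the geometry (the $\sqrt3/2$ and $\coss{\pi/18}$ terms look like they come from a distinguished-agent chord plus an altitude-type drop, and $4\sinn{\pi/9}$ from a length-$4$ servant walk along consecutive $9$-gon edges — I would reverse-engineer the exact trajectory from these summands). \textbf{Step 4:} give the matching LP dual certificate for that instance, and then dispatch the remaining $(\rho,b)$ either by exhaustive (symbolic) solution or by a short domination lemma. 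The genuinely hard step is Step 4's "remaining cases": making the exhaustive LP sweep into something presentable, which is why the authors flag the computer-assisted and not-fully-exhausted nature of the computation.
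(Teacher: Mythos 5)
Your proposal is correct and follows essentially the same route as the paper: reduce via Lemma~\ref{lem: lb to disk} (with the pseudo-linearity of $\textrm{proj}_2$ moving the $\pi/9$ offset outside), then determine $\min_{\rho,b}\textsc{REL}^{\textrm{proj}_2}_9(\rho,b,1)$ by an exhaustive, symbolic, computer-assisted sweep of the linear programs, arriving at $1+\sqrt3/2+\coss{\pi/18}+4\sinn{\pi/9}$ for the minimizing $(\rho,b)$. The only difference is presentational: the paper simply solves all LPs in \textsc{Mathematica} (each solution carrying its own optimality certificate) and tabulates the minimizer, rather than exhibiting explicit primal/dual certificates or a domination-based pruning as you sketch in Step~4.
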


As a reminder, the previously best lower bound known for the problem was $4.38962$. Note that not only the improvement is significant, but most importantly, the provided framework allows for even further improvements if one utilizes computational resources more efficiently, e.g. use specialized software for solving LP's instead of \textsc{Mathematica} that was used in this project. 
 
The proof of Theorem~\ref{thm: new priority lb} follows immediately from Lemma~\ref{lem: lb to disk} once we present new lower bounds for \wsngon{\textrm{proj}_2}{n}, $n>6$. 
Indeed, for $f=\textrm{proj}_2$, we have that $f(1,1)=1$ and the cost function is linear. 
For this reason, 
due to Lemma~\ref{lem: lb to disk}, and as per the calculations~\eqref{equa: pseudo linear}, 
no algorithm for \wsdisk{\textrm{proj}_2} has evacuation cost less than 
$
\frac{1}{f(1,1)} \min_{\rho \in \mathcal R_n, b \in [1]^n }
\textsc{REL}^{\textrm{proj}_2}_n\left( \rho,b,1+\frac{\pi}n \right)
=
\frac{\pi}n+
\min_{\rho \in \mathcal R_n, b \in [1]^n }
\textsc{REL}^{\textrm{proj}_2}_n\left( \rho,b,1 \right),
$
where in particular $\textsc{REL}^{\textrm{proj}_2}_n\left( \rho,b,1 \right)$ is a lower bound to the optimal $(\rho,b)$-algorithm for \wsngon{\textrm{proj}_2}{n}.
For each $n=3,\ldots, 9$, we solve $\textsc{REL}^{\textrm{proj}_2}_n(\rho,b,1)$ for all permutations $\rho \in \mathcal R_n$ and all binary strings $b\in [1]^n$, and we report the smallest values in Table~\ref{tab: n-gone lower bounds}. 
The calculations were computer-assisted but also symbolic (non-numerical).\footnote{Calculations were performed symbolically with \textsc{Mathematica}. The solution to any LP comes with a proof of optimality.} 

\begin{table}
$
\begin{array}{r | c c c c c c c c c c} 
n 			& min_{\rho \in \mathcal R_n, b \in [1]^n }
\textsc{REL}^{\textrm{proj}_2}_n\left( \rho,b,1 \right) & \textrm{Num}	&  b   & \rho     \\
\hline 
\hline
3			&		1+\sqrt{3}					&	2.73205		&	\{0, 1, 0\}	&	(1,2,3)		&				\\
4			&		1+3/\sqrt{2}					&	3.12132		&	\{1, 0, 0, 1\}	&	(1,2,3,4)		&				\\
5			&		1+\sqrt{25 + 2 \sqrt{5}}/2	&	3.71441		&	\{0, 1, 0, 1, 1\}	& (1, 3, 2, 4, 5)	&				\\
6			&		3 + \sqrt3/2				&	3.86603		&	\{1, 0, 0, 1, 0, 1\}	&	(1, 2, 3, 5, 4, 6)	&				\\
7			&		1+\coss{3\pi/14} + 5\sinn{\pi/7}		&	3.95125				&	\{1, 0, 0, 1, 0, 1, 0\}	&	(1, 2, 3, 7, 4, 6, 5)		&				\\
8			&		1+\sqrt2/2 + 6\sinn{\pi/8}				&	4.00321				&	\{1, 0, 0, 1, 0, 1, 0, 1\}	&	(1, 2, 3, 8, 4, 6, 5, 7)		&				\\
9			&		1+\sqrt3/2 + \coss{\pi/18} + 4\sinn{\pi/9}		& 4.21891	&	\{1, 0, 0, 1, 0, 1, 0, 1, 0\}	& (1, 2, 3, 9, 4, 8, 5, 7, 6)	&		
\end{array}
$
\caption{Lower bounds to \wsngon{\textrm{proj}_2}{n}, for $n=3,\ldots, 9$ are given in the first column as the solutions to the Linear Programs $\textsc{REL}^{\textrm{proj}_2}_n\left( \rho,b,1 \right)$. 
The second column is the numerical value of the lower bound, while the last two columns contain the corresponding minimizer permutation $\rho \in \mathcal R_n$ and binary string $b \in [1]^n$. 
}
\label{tab: n-gone lower bounds}
\end{table}
From Table~\ref{tab: n-gone lower bounds}, and for $n=9$, we derive a new lower bound to 
\wsdisk{\textrm{proj}_2}, which is 
$$
\frac{\pi}9+
\min_{\rho \in \mathcal R_9, b \in [1]^9 }
\textsc{REL}^{\textrm{proj}_2}_9\left( \rho,b,1 \right)
=
\frac{\pi}9
+
1+\sqrt3/2 + \coss{\pi/18} + 4\sinn{\pi/9} \approx 4.56798,
$$
as promised. 
The lower bounds for $n=3,4,5$ are only derived for comparison (and to illustrate how easily our methodology can derive new lower bounds to \wsngon{\textrm{proj}_2}{n}).
We also emphasize that the reported lower bound for $n=6$ agrees with the one reported and proven in~\cite{czyzowicz2020priority123}.
In particular, this implies that for $n=6$, and due to the matching upper bound for \wsngon{\textrm{proj}_2}{6} in~\cite{czyzowicz2020priority123}, we have that
relaxation $\textsc{REL}^{\textrm{proj}_2}_6\left( \rho,b,1 \right)$
to $\textsc{NLP}^{\textrm{proj}_2}_6\left( \rho,b,1 \right)$
is exact, at least for the optimizers $\rho,b$ on Table~\ref{tab: n-gone lower bounds}. 
This is no longer the case for $n=7,8,9$, since the solution found for $\textsc{REL}^{\textrm{proj}_2}_7\left( \rho,b,1 \right)$ is not embeddable in $(\reals^2, \ell_2)$. 
For this reason, it is no surprise that quantities
$$
\frac{\pi}n+
\min_{\rho \in \mathcal R_n, b \in [1]^n }
\textsc{REL}^{\textrm{proj}_2}_n\left( \rho,b,1 \right)
$$
are not increasing in $n$, and only the lower bounds for $n=7,9$ improve upon the one derived for $n=6$.
Finally, the computations for $n=10$ are more intense, but unfortunately they do not give rise to further improvement (without exhausting all permutations and binary strings, a solution was found that gives a weaker lower bound that the case $n=9$). 
Finally, the case $n=11$ cannot be treated exhaustively since the number of configurations (approximately $7\cdot10^8$) becomes forbidden, especially for the size of the linear programs that is growing too, with $\Theta(n^3)$ constraints and $\Theta(n^2)$ variables.

\section{Weighted Group Search on the Disk}
\label{sec: arithmetic weighted disk}

In this section we derive upper and lower bounds for \wsdisk{g_w}, stemming from the cost function $g_w(x,y)=wx+y$,  $w\in [0,1]$. Our upper and lower bounds are quantified by concrete formulas. Our lower bounds are derived by utilizing Lemma~\ref{lem: lb to disk}. In Figure~\ref{fig: upper lower arithmetic} we summarize our results graphically before we give the technical details in Sections~\ref{sec: upper arithmetic weighted disk},~\ref{sec: lower arithmetic weighted disk}. 

\ignore{
 \begin{figure}[h!]
\centering
  \includegraphics[width=6cm]{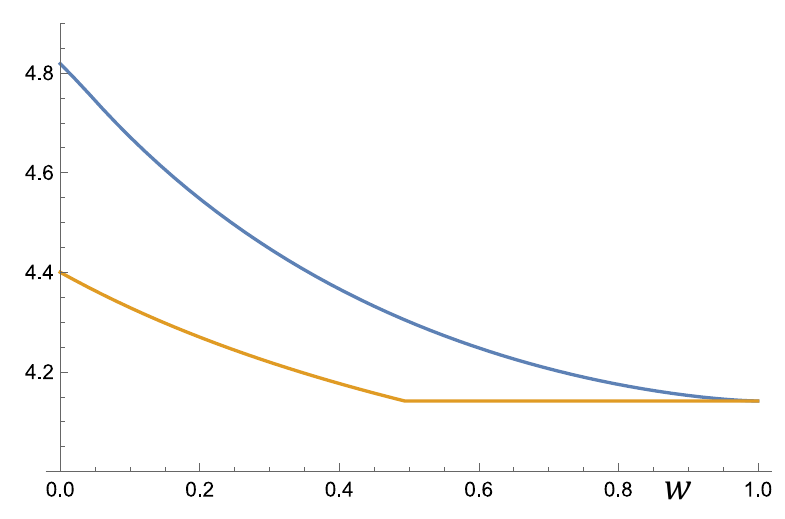}
\caption{Derived upper and lower bounds for \wsdisk{g_w}, where $g_w(x,y)=wx+y, w\in [0,1]$. }
\label{fig: upper lower arithmetic}
\end{figure}
}

\begin{figure}[h!]
  \begin{minipage}{0.45\textwidth}
    \centering
    \includegraphics[width=6cm]{figs/ArithmeticResults2.pdf}
    \caption{Derived upper and lower bounds for \wsdisk{g_w}, where $g_w(x,y)=wx+y, w\in [0,1]$.}
    \label{fig: upper lower arithmetic}
  \end{minipage}%
\hspace{.5cm} 
  \begin{minipage}{0.5\textwidth}
We emphasize that for $w=0$ we have that $g_w=g_0=\textrm{proj}_2$. 
In particular, the upper bound depicted in Figure~\ref{fig: upper lower arithmetic} is also the one reported in~\cite{czyzowicz2020priority123} for \wsdisk{\textrm{proj}_2}, while the depicted lower bound is an improvement of the one derived in~\cite{czyzowicz2020priority123}, but lower than the one we proved for the special problem \wsdisk{g_0} in Section~\ref{sec: implied LB priority evacuation}. 
This is because for the lower bounds to \wsdisk{g_w} we relied on the lower bounds we managed to prove for \wsngon{g_w}{7}. Indeed, dealing with lower bounds to \wsngon{g_w}{9} was computationally too demanding, taking into consideration that parameter $w$ also ranges in $[0,1]$. 
  \end{minipage}
\end{figure}

\subsection{Upper Bounds to Weighted Group Search on the Disk}
\label{sec: upper arithmetic weighted disk}

The upper bound results are quantified in the following lemma (and are depicted in Figure~\ref{fig: upper lower arithmetic}). 

\begin{lemma}
\label{lem: arithmetic upper bound}
For each $w \in [0,0.2]$, let $\alpha=\alpha_w$ and $\beta=\beta_w$ be the solutions to non-linear system\footnote{$\alpha_w,\beta_w$ are to be invoked only for smaller values of $w$. Also $
\gamma_1 = \gamma_2 = \gamma_3,
$ admits a solution also for higher values, but not for all $w\in [0,1]$. The value of $0.2$ was chosen only for aesthetic reasons.} 
$
\gamma_1 = \gamma_2 = \gamma_3,
$
where
\begin{align*}
\gamma_1 &= \alpha +\frac{2 \sin (\alpha )}{w+1} \\
\gamma_2 &= 2 \pi-\alpha -\beta +\frac{2 \sin \left(\alpha +\frac{\beta }{2}+\sin \left(\frac{\beta }{2}\right)\right)}{w+1} \\
\gamma_3 &=
2 \sin \left(\frac{\beta }{2}\right)+\frac{\alpha +\beta +w (-\alpha -\beta +2 \pi )}{w+1},
\end{align*}
see also Figure~\ref{fig: abParameters Arithmetic}. 
Then for each $w\in [0,1]$, \wsdisk{g_w} admits a solution with search cost at most 
$
1+ d_w + \frac{2\sinn{d_w}}{w+1}
$, 
where $d_w =\alpha_w$ if $w\leq w_0$ and $d_w=\arccoss{-\frac{w+1}{2}}$ otherwise. 
\ignore{
$$
1+ d_w + \frac{2\sinn{d_w}}{w+1}
,~\textrm{where}~~
d_w =
\left\{
\begin{array}{ll}
\alpha_w & w\leq w_0	\\ 
\arccoss{-\frac{w+1}{2}} &  w> w_0
\end{array}
\right.,
$$
and 
}
Moreover, $w_0$ is defined by equation $\alpha_w=\arccoss{-\frac{w+1}{2}}$, with $w_0 \approx 0.0456911$. 
\end{lemma}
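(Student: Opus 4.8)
The plan is to exhibit, for every $w\in[0,1]$, an explicit pair of trajectories $\tau_0,\tau_1$ that $w$-parameterizes the priority-evacuation algorithm of~\cite{czyzowicz2020priority123}, and then to bound its search cost by a case analysis over all possible targets $I$ on the unit circle. At a high level the trajectory family is the following: both agents travel along a common radius and reach the circle at time $1$; from there they sweep the perimeter at unit speed in opposite directions, symmetrically, each covering an arc of length $\alpha=\alpha_w$; afterwards the servant $A_0$ keeps sweeping the residual arc, while the distinguished agent $A_1$ first extends her sweep by a further arc $\beta=\beta_w$ and then leaves the perimeter, travelling along a chord aimed at the residual arc so as to stay ``ahead'' of where $A_0$ will discover the target. (For larger $w$ the detour vanishes and both agents simply sweep symmetric arcs of length $d_w=\arccoss{-\frac{w+1}{2}}$.) By the wireless assumption, as soon as one agent reaches $I$ the other proceeds straight to it, so~\eqref{equa: last jump} fixes its termination time; at $w=0$ the construction reduces to that of~\cite{czyzowicz2020priority123}.

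Next I would compute the search cost per target. Parameterizing $I$ by its signed arc-distance from the landing point, the breakpoints of the two trajectories (the symmetric sweep of length $\alpha$, the extra sweep $\beta$ of $A_1$, her chord phase, and the residual arc swept only by $A_0$) cut $[0,2\pi)$ into a constant number of subintervals. On each subinterval both $T_0(I)$ and $T_1(I)$ are closed forms in the arc-parameter, $w$, $\alpha$ and $\beta$ --- an ``arc travelled so far'' term plus a straight-line ``evacuation jump'' whose length is a chord computed by elementary trigonometry. Substituting into the objective $g_w(T_0(I),T_1(I))/g_w(1,1)$, which on each subinterval reduces to the finder's travel time plus a $w$-dependent multiple of the non-finder's evacuation chord, and differentiating in the arc-parameter, each such ratio is monotone or unimodal, so its supremum on the subinterval is attained at an endpoint. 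The endpoint values that can be maximal reduce, after simplification, to $1+\gamma_1$, $1+\gamma_2$, $1+\gamma_3$ when $w\le w_0$ --- three distinguished target placements, roughly: at the far end of the symmetric sweep, at the far end of $A_0$'s residual sweep, and at the configuration where $A_1$ completes her chord --- and to the single value $1+d_w+\frac{2\sinn{d_w}}{w+1}$ when $w>w_0$.

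The parameters $\alpha_w,\beta_w$ are then pinned down by the minimax tuning $\gamma_1=\gamma_2=\gamma_3$, which equalizes the three candidate worst cases (a short symmetric sweep favours $\gamma_1$ but a long one shrinks the residual arc and favours $\gamma_2$, while a short extra sweep $\beta$ favours $\gamma_3$ but a long one again favours $\gamma_2$, so equalizing all three is the right trade-off); I would show this nonlinear system is solvable on $[0,w_0]$ and that there $\gamma_1=\alpha_w+\frac{2\sinn{\alpha_w}}{w+1}$, so the common value is the claimed $1+\alpha_w+\frac{2\sinn{\alpha_w}}{w+1}$, i.e.\ $d_w=\alpha_w$. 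For $w>w_0$ the ternary balance is no longer feasible and the binding case becomes the (now unconstrained) maximization of $a\mapsto a+\frac{2\sinn{a}}{w+1}$ over the length of the symmetric sweep; setting the derivative $1+\frac{2\coss{a}}{w+1}$ to zero gives the maximizer $d_w=\arccoss{-\frac{w+1}{2}}$, and one checks the remaining candidates are dominated. The crossover $w_0\approx0.0456911$ is exactly the value of $w$ at which $\alpha_w=\arccoss{-\frac{w+1}{2}}$, so the two regimes agree and the bound is continuous.

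The main obstacle is the bookkeeping rather than any single computation: as emphasized in the introduction, the order in which the trajectories' breakpoints occur varies with $w$, so the partition of $[0,2\pi)$ into cases --- and hence which endpoint values are candidate maxima --- is itself $w$-dependent. The monotonicity/unimodality checks, together with the verification that across all regimes exactly the three values $\gamma_i$ (resp.\ the single closed form) dominate and that the tuned system $\gamma_1=\gamma_2=\gamma_3$ remains feasible up to $w_0$, therefore have to be carried out piecewise in $w$ as well. This piecewise-in-both-variables analysis, elementary but long, is where essentially all of the argument's length resides.
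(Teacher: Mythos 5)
There is a genuine gap: the trajectory family you analyze is not the one that the lemma's constants encode, so your claimed reduction of the worst-case endpoint values to $1+\gamma_1$, $1+\gamma_2$, $1+\gamma_3$ does not go through. In the paper's $(a,b)$-Detour Algorithm the distinguished agent $A_1$ searches an arc of length $\alpha$ only, then immediately leaves the perimeter along the chord of length $2\sin(\beta/2)$ joining the endpoint of that arc to the point $E$ at angular distance $\beta$ beyond it, and only \emph{afterwards} searches the arc of length $\beta$ \emph{backwards} from $E$; meanwhile the servant $A_0$ sweeps the long arc of length $2\pi-\alpha-\beta$ ending at $E$ and then traverses the same chord. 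The lemma's formulas are precisely the worst cases of that schedule: the argument $\alpha+\tfrac{\beta}{2}+\sin(\tfrac{\beta}{2})$ inside $\gamma_2$ is half the angular gap between $E$ and $A_1$'s position on her backward sweep at the moment $A_0$ reaches $E$ (she has already spent time $2\sin(\beta/2)$ on the chord), and $\gamma_3$'s additive $2\sin(\beta/2)$ together with the terms $\alpha+\beta$ and $2\pi-\alpha-\beta$ encode a target just past the queen's first endpoint, found by $A_1$ at time $1+\alpha+2\sin(\beta/2)+\beta$ while $A_0$ arrives there at time $1+2\pi-\alpha-\beta+2\sin(\beta/2)$. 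In your trajectory, $A_1$ sweeps $\alpha+\beta$ contiguously and then cuts a chord ``aimed at the residual arc'' with unspecified endpoint and never searches again: then the end of her first sweep has no unexplored neighborhood (so the $\gamma_3$-type adversarial placement does not exist), neither agent ever traverses a chord of length $2\sin(\beta/2)$, and her distance to the last point the servant discovers is not $2\sin(\alpha+\tfrac{\beta}{2}+\sin(\tfrac{\beta}{2}))$. Consequently the system $\gamma_1=\gamma_2=\gamma_3$ does not equalize the worst cases of \emph{your} algorithm, and the bound $1+\alpha_w+\tfrac{2\sin(\alpha_w)}{w+1}$ is not established by your argument.

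A second, related error concerns the regime $w>w_0$: you state that both agents ``simply sweep symmetric arcs of length $d_w=\arccos(-\tfrac{w+1}{2})$'' and later treat $d_w$ as a maximizer over ``the length of the symmetric sweep.'' Since $d_w<\pi$ for $w<1$, such a trajectory leaves part of the circle unsearched and is not a feasible search algorithm. In the paper the algorithm for $w>w_0$ is the $(\pi,0)$-Detour, in which each agent sweeps a half-circle; $\arccos(-\tfrac{w+1}{2})$ is not a design parameter but merely the time $t\in[0,\pi]$ at which the per-target cost $1+t+\tfrac{2\sin t}{w+1}$ attains its maximum. Aside from these points your overall plan (case analysis over target placements, monotonicity/unimodality checks to push suprema to endpoints, and equalization of the surviving candidates) mirrors the paper's proof, but it must be carried out on the correct detour trajectory for the stated $\gamma_i$ and the stated $\alpha_w,\beta_w,w_0$ to have the meaning the lemma assigns them.
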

As we will see next, the threshold of $w\leq 0.0456911$ corresponds to the critical ratio of the two weights $1/0.0456911\approx 21.8861$ that indicates when the agent with the higher weight has incentive to deviate from the search in order to expedite her evacuation. When the agents' weight ratio is less than $21.8861$, then a plain vanilla algorithm is the best we can report, which however is optimal when the weight ratio is $1$, i.e. when $w=1$, see Lemma~\ref{lem: weak g0 lb}.

\ignore{
 \begin{figure}[h!]
\centering
  \includegraphics[width=5cm]{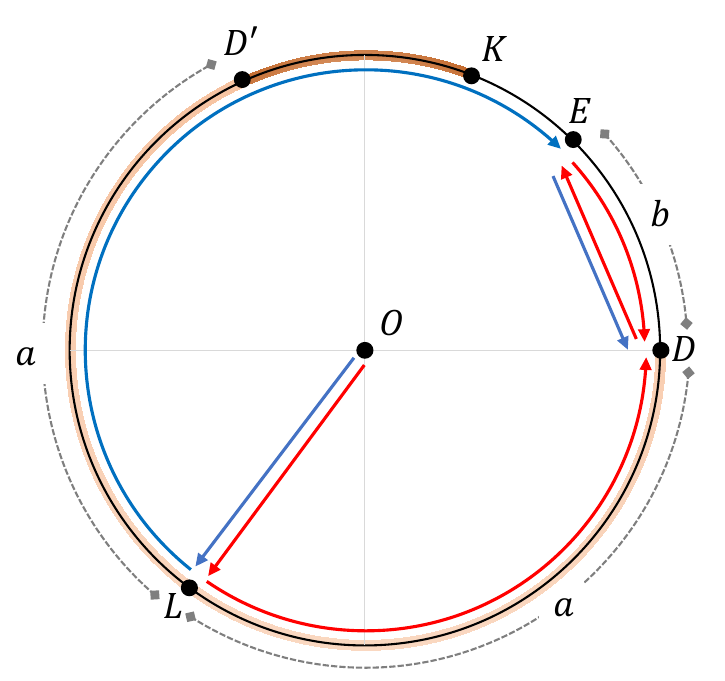}
\caption{The $(a,b)$-detour Algorithm}
\label{fig: ab detour algorithm}
\end{figure}
}

\begin{figure}[h!]
\begin{subfigure}[t]{0.45\textwidth}
\centering
\includegraphics[width=4.5cm]{figs/abAlgorithm.pdf}
\caption{The $(a,b)$-Detour Algorithm for \wsdisk{f}. Agent $A_0$ trajectory is depicted in blue, and agent $A_1$ trajectory is depicted in red.}
\label{fig: ab detour algorithm}
\end{subfigure}\hfill
\begin{subfigure}[t]{0.47\textwidth }
\includegraphics[width=6cm]{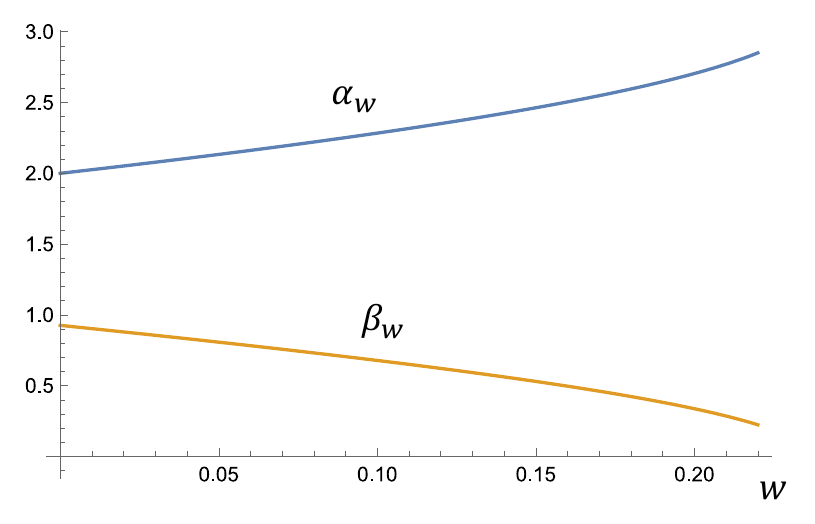}
\caption{The bevavior of parameters $\alpha_w,\beta_w$, solutions to the non-linear system of Lemma~\ref{lem: arithmetic upper bound}.
Only the values for $w\leq 0.0456911$ are relevant to the $(a,b)$-Detour Algorithm when applied to \wsdisk{g_w}.}
\label{fig: abParameters Arithmetic}
\end{subfigure}
\caption{The Detour Algorithm for \wsdisk{f}, and some of its parameters.}
\end{figure}

First we present the so-called $(a,b)$-detour Algorithm, introduced in~\cite{czyzowicz2020priority123}, specifically for \wsdisk{\textrm{proj}_2}. For this, we present the two agent trajectories $\tau_0,\tau_1: \reals_{\geq 0}\mapsto \reals^2$, where the functions depend on parameters $a,b$. Both trajectories will be piece-wise movements along arcs and chords of the unit radius disk. The description of the trajectories is given under the assumption that no target is reported or found. Should the target be located by the agent, then the agent halts and transmits the finding message to her peer. Should the finding message reach an agent, then the agent moves along the shortest path towards the target.
We emphasize that the algorithm is applicable to the search domain \textsc{Disk}, and therefore provides some solution to \wsdisk{f}, for any cost function $f$. 
For the exposition, we also use abbreviation $c(t):= \left( \coss{t}, \sinn{t} \right)$, which is the parametric equation of the unit-radius disk. Next we give formal description of the search algorithm; the reader may also consult Figure~\ref{fig: ab detour algorithm}. All movements are always at unit speed. 

\begin{definition}[The $(a,b)$-Detour Algorithm for \wsdisk{f}] ~\\
Trajectory $\tau_0: \reals_{\geq 0}\mapsto \reals^2$ of agent $A_0$, starting from origin $O$: 
Move to point $L=c(-a)$; Search clockwise up to point $E=c(b)$; Move to $D=c(0)$ along chord $ED$. \\
Trajectory $\tau_1: \reals_{\geq 0}\mapsto \reals^2$ of agent $A_1$, starting from origin $O$: 
Move to point $L=c(-a)$; Search counter-clockwise up to point $D=c(0)$; Move to $E=c(b)$ along chord $ED$; Search 
clockwise up to $D=c(0)$. 
\end{definition}

The following lemma effectively proves our upper bound claim of Lemma~\ref{lem: arithmetic upper bound}.

\begin{lemma}
\label{lem: performance of ab algo for arithmetic}
For each $w\in [0,1]$, let 
$$
a_w = \left\{
\begin{array}{ll}
\alpha_w & w \leq w_0  \\
\pi & w > w_0 
\end{array}
\right.,~~
b_w = \left\{
\begin{array}{ll}
\beta_w & w \leq w_0  \\
0 & w > w_0 
\end{array}
\right.,
$$
where $\alpha_w,\beta_w$ and $w_0\approx  0.0456911$ are as in Lemma~\ref{lem: arithmetic upper bound}. 
Then, the $(a_w,b_w)$-Detour Algorithm for \wsdisk{g_w} has evacuation cost 
$
1+ d_w + \frac{2\sinn{d_w}}{w+1}
,$ 
where 
$$
d_w =
\left\{
\begin{array}{ll}
\alpha_w & w\leq w_0	\\ 
\arccoss{-\frac{w+1}{2}} &  w> w_0
\end{array}
\right..
$$
\end{lemma}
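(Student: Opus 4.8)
The plan is to express the evacuation cost as the supremum, over the position of the hidden target on the unit circle (parametrized by its angle), of the normalized cost $\tfrac{wT_0(I)+T_1(I)}{w+1}$, and then carry out a finite case analysis. First I would write out $\tau_0,\tau_1$ explicitly as piecewise arc/chord motions and record the landmark times: agent $A_0$ reaches $L=c(-a)$ at time $1$, sweeps the long arc $L\to E$ of length $2\pi-a-b$ (the one avoiding $D$), then traverses the chord $E\to D$; agent $A_1$ reaches $L$ at time $1$, sweeps the short arc $L\to D$ of length $a$, traverses the chord $D\to E$ of length $2\sin(b/2)$, and finally sweeps the arc $E\to D$ of length $b$. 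Together the agents cover the whole circle, so every target is found, and by~\eqref{equa: last jump} the non-finder's termination time is determined by its location at the instant of discovery. For the range $w\le w_0$ (where $a_w=\alpha_w$, $b_w=\beta_w$) the crucial preliminary fact is the ordering $\pi-\sin(\beta_w/2)<\alpha_w+\beta_w<\pi$: it says that $A_0$ finishes its long arc while $A_1$ is still on its detour sweep, and reaches $D$ only after $A_1$ has halted — this is exactly what pins down where each non-finder stands.

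Assume first $w\le w_0$. Partition the circle according to which agent is the finder, in which phase, and where the other agent is at that moment; on each resulting piece the normalized cost is an explicit combination of $\sin,\cos$ of the target angle that is monotone or unimodal, so its supremum is attained at an endpoint or at a single interior critical point. Among these candidates, three dominate, corresponding to: (i) target $c(-2\alpha_w)$, which $A_0$ finds exactly when $A_1$ reaches $D$, after which $A_1$ rushes distance $2\sin\alpha_w$; (ii) target at the far tip of the long arc near $E$, which $A_0$ finds at time $1+2\pi-\alpha_w-\beta_w$, after which $A_1$ — still on its detour sweep — rushes distance $2\sin(\alpha_w+\tfrac{\beta_w}{2}+\sin(\tfrac{\beta_w}{2}))$; and (iii) target just past $D$ on the sweep $E\to D$, which $A_1$ finds at the very end of its trajectory, after which $A_0$ — now on its closing chord — rushes the remaining distance $2\pi-2\alpha_w-2\beta_w$. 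A direct computation identifies the three resulting costs as $1+\gamma_1,1+\gamma_2,1+\gamma_3$; by the defining system $\gamma_1=\gamma_2=\gamma_3$ these coincide, and their common value simplifies to $1+\alpha_w+\tfrac{2\sin\alpha_w}{w+1}=1+d_w+\tfrac{2\sin d_w}{w+1}$ (the supremum being attained, at $c(-2\alpha_w)$). It remains to check that no other candidate exceeds this — in particular the ``$w$-discounted'' candidates arising when $A_1$ is the finder during its first sweep (these carry $2w\sin(\cdot)$ where the dominant ones carry $2\sin(\cdot)$, hence are smaller), and the pieces on which the non-finder sits on a chord (bounded using convexity of the point-to-segment distance); since $\gamma_1=\gamma_2$, it suffices to bound these by $\max\{1+\gamma_1,1+\gamma_2\}$.

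For $w>w_0$ the parameters are $a_w=\pi,\ b_w=0$, so the detour degenerates: both agents go to the antipode $(-1,0)$ and sweep opposite semicircles back to $c(0)$, arriving simultaneously at time $1+\pi$. By the mirror symmetry of the two trajectories the only relevant target family is a target at arc-distance $\phi\in(0,\pi)$ from $(-1,0)$ along its semicircle; that semicircle's agent is the finder at time $1+\pi-\phi$, the other agent sits at the mirror point at distance $2\sin\phi$, and the normalized cost is $(1+\pi-\phi)+\tfrac{2\sin\phi}{w+1}$. This is concave in $\phi$ on $[0,\pi]$ with interior maximum where $\cos\phi=\tfrac{w+1}{2}$, i.e.\ at $\phi=\pi-d_w$ with $d_w=\arccoss{-\tfrac{w+1}{2}}$, giving cost $1+d_w+\tfrac{2\sin d_w}{w+1}$, while the endpoints $\phi\to0,\pi$ give only $1+\pi$ and $1$. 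The two regimes agree at the threshold because $w_0$ is defined by $\alpha_{w_0}=\arccoss{-\tfrac{w_0+1}{2}}$.

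The main obstacle is the $w\le w_0$ case analysis: correctly enumerating the finder/phase combinations and, above all, tracking the non-finder's position, which rests on the ordering inequalities above together with a handful of further comparisons of quantities like $\alpha_w+\tfrac{\beta_w}{2}+\sin(\tfrac{\beta_w}{2})$ against $\pi$ (all of which follow from the main two), and on establishing the monotonicity/unimodality of each non-dominant piece (routine but tedious sign analysis of trigonometric derivatives). A final ingredient, flagged in the footnote, is to show that $\gamma_1=\gamma_2=\gamma_3$ has a solution $(\alpha_w,\beta_w)$ in the admissible range with $\beta_w>0$ for every $w\in[0,w_0]$ — an intermediate-value / monotonicity-in-$w$ argument that simultaneously pins down $w_0\approx 0.0456911$.
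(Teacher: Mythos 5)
Your plan follows essentially the same route as the paper: the same partition of targets for $w\le w_0$ whose three dominant configurations are exactly the paper's $c_1=1+\gamma_1$, $c_3=1+\gamma_2$, $c_4=1+\gamma_3$ equalized by the defining system $\gamma_1=\gamma_2=\gamma_3$, the same concavity argument for the degenerate $(\pi,0)$ algorithm when $w>w_0$, and the deferred ordering/monotonicity checks are precisely the ones the paper settles by computer-assisted symbolic plots. One minor slip: for $w>w_0$ an agent reaches a target at arc-distance $\phi$ from $(-1,0)$ at time $1+\phi$, not $1+\pi-\phi$; this merely relabels the parameter (the cost function is its own reflection) and leaves the supremum $1+d_w+\tfrac{2\sin(d_w)}{w+1}$ unchanged.
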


\begin{proof}
We begin the proof by analyzing the $(\pi,0)$-Detour Algorithm that is applicable to \wsdisk{g_w} for $w\geq w_0 \approx  0.0456911$ (we shall see soon that there is continuity of the evacuation cost at $w=w_0$). 
In that case, each agent searches half the perimeter of the unit disk. If the target is reported after searching for time $t \leq \pi$, then the termination time $T_i(t)$ of the finder $i \in [1]$ is $T_i(t)=1+t$, and as per~\eqref{equa: last jump} that of the non-finder is $T_{1-i}(t)=1+t+2\sinn{t}$. 
Any of the two agents could be the finder, but for cost function $g_w(x,y)=wx+y$ with $w\in [0,1]$, it is not less costly when the finder is agent $A_0$, and for this reason, the search cost of the algorithm, should the exit be reported after searching time $t$, is 
$$
\frac{g_w\left(1+t,1+t+ 2\sinn{t} \right) }{g_w(1,1)} 
= 
1+t+\frac{2\sinn{t}}{w+1}.
$$
The last expression is increasing up to $t_w = \arccoss{-(w+1)/2}$, and then decreasing, i.e. it is concave. 
It follows that its maximum becomes $1+ t_w + \frac{2\sinn{t_w}}{w+1}$, as claimed. 

We now turn our attention to the performance of the $(\alpha_w,\beta_w)$-Detour Algorithm, for $w\leq w_0$. 
Parameter values for $\alpha_w,\beta_w$ are depicted in Figure~\ref{fig: abParameters Arithmetic}. We distinguish 4 cases as to where the target may be placed, and in each of them we evaluate the termination times of the agents, along with the associated worst case search costs $c_1, c_2, c_3, c_4$ (and our parameters $\alpha_w, \beta_w$ are identified by making all costs equal). 
For the analysis, the reader may also consult Figure~\ref{fig: ab detour algorithm}. 
In what follows arcs are read counter-clockwise. 

\textbf{Case 1:} Target is placed in arc $DD'$, i.e. it is found while the two agents search in opposite direction for up to $\alpha_w$ time. The search in this phase lasts for time $\alpha_w$, hence the target is reported after searching for $t \leq \alpha_w$. From the analysis we performed when $w\leq w_0$, we know that in the current case, the search cost is $1+t+\frac{2\sinn{t}}{w+1}$. For all $w\leq w_0$ we have that $\alpha_w \leq \arccoss{(-(w+1)/2)}$, see also Figure~\ref{fig: awarccosComparison} (in fact $w_0$ was identified by equation  $\alpha_{w_0} \leq \arccoss{(-(w_0+1)/2)}$, see statement of Lemma~\ref{lem: arithmetic upper bound}). 
Therefore, as per our previous analysis, the cost in this case remains strictly increasing in $t$, and hence in Case 1, the worst case search cost is 
$$c_1 = 1+ \alpha_w + \frac{2\sinn{\alpha_w}}{w+1}.$$

\begin{figure}[htb!]
\begin{subfigure}[t]{0.45\textwidth}
\centering     
\includegraphics[width=5cm]{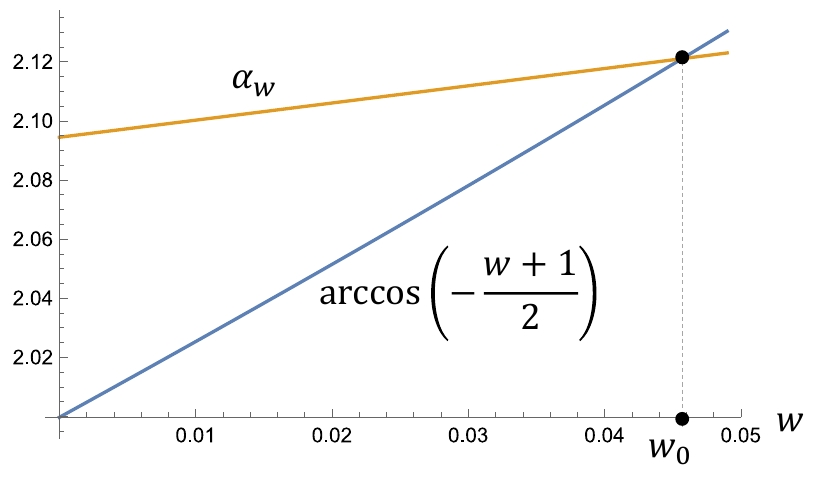}
\caption{Comparison between $\alpha_w$ and $\arccos{-(w+1)/2}$, and the defining value of $w_0$.}
\label{fig: awarccosComparison}
\end{subfigure}~~~~~~~~~~ \hfill
\begin{subfigure}[t]{0.6\textwidth}
\centering     
\includegraphics[width=4.5cm]{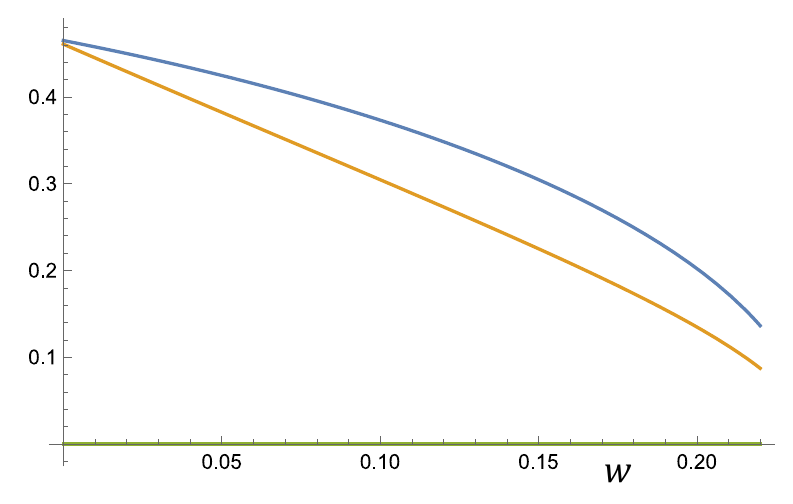} 
\caption{The plot of $2\pi - 2\alpha_w - \beta_w - 2\sinn{\beta_w}/2$ in blue proves that for all $w\leq w_0$, 
point $K$ in Figure~\ref{fig: ab detour algorithm} lies within arc $D'E$ (case 2 in the proof of Lemma~\ref{lem: performance of ab algo for arithmetic}). 
The plot of $-2\pi + 2\alpha_w + 2\beta_w + 2\sinn{\beta_w}/2$ in orange proves that for all $w\leq w_0$, 
agent $A_0$ arrives at $E$ before agent $A_1$ reaches $D$ for the second time, (case 4 in the proof of Lemma~\ref{lem: performance of ab algo for arithmetic}). The line $y=0$ in green is depicted for reference. 
}
\label{fig: case 2 arrivals}
\end{subfigure}\hfill
\begin{subfigure}[t]{0.45\textwidth}
\centering     
\includegraphics[width=5cm]{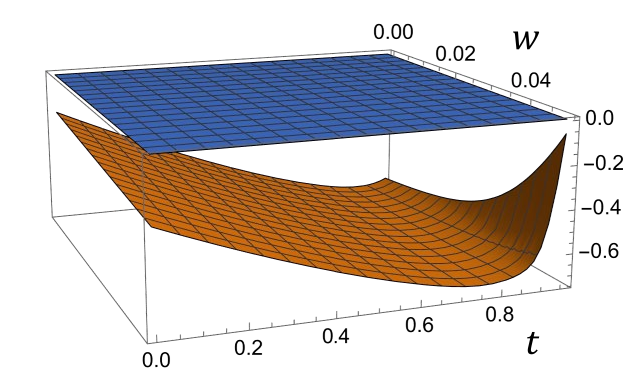}
\caption{The derivative of search cost~\eqref{equa: case2 monotone}. 
The plot depicts the plane $z=0$, showing that the derivative of the function is negative and bounded away from $0$, in the subject domain.
}
\label{fig: case2Monotonicity}
\end{subfigure}\hfill
\begin{subfigure}[t]{0.45\textwidth}
\centering     
\includegraphics[width=5cm]{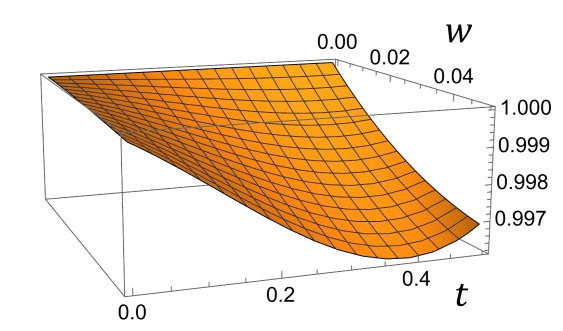}
\caption{The derivative of search cost~\eqref{equa: case4 monotone}. 
The values are are close to $1$ showing that the cost function is strictly increasing in the subject domain.
}
\label{fig: case4Monotonicity}
\end{subfigure}\hfill
\caption{The behavior of some expressions, as needed in the proof of Lemma~\ref{lem: performance of ab algo for arithmetic}.} 
\end{figure}

\ignore{
\subfigure[The derivative of search cost~\eqref{equa: case2 monotone}. 
The plot depicts the plane $z=0$, showing that the derivative of the function is negative and bounded away from 0, in the domain $t \in [0, 2\sinn{\beta_w/2}]$, for all $w\in [0,w_0]$. By Figure~\ref{fig: abParameters Arithmetic}, we know that 
$\beta_w\leq 0.925793$ (the value used in~\cite{czyzowicz2020priority123} for specifically for \wsdisk{g_0}). For this reason, our plot that ranges over $w\in [0,2\sinn{1/2}]$ is sufficient for our purposes. 
]
{\label{fig: case2Monotonicity}
\includegraphics[width=5cm]{figs/case2Monotonicity.pdf}}
~~\hspace{1cm}~~~~~~~
}

\ignore{
\subfigure[........... ]
{\label{fig: awarccosComparison}
\includegraphics[width=5cm]{figs/awarccosComparison.pdf}}
~~\hspace{1cm}~~~~~~~
}

\textbf{Case 2:} Target is placed in arc $D'K$, i.e. while agent $A_1$ moves from $D$ to $E$ along chord $DE$. \\
First we show that point $K$ lies in between arc $D'E$, that is agent $A_1$ arrives at $E$ before agent $A_0$. To see why, notice that agent $A_0$ arrives at $E$ in time 
$
1+2\pi -\alpha_w -\beta_w
$.
Similarly, agent $A_1$ arrives at $E$ in time $1+\alpha_w + 2\sinn{\beta_w}/2$. The time gap of these events is 
$$
1+2\pi -\alpha_w -\beta_w
\left( 
1+\alpha_w + 2\sinn{\beta_w}/2
\right)
=
2\pi - 2\alpha_w - \beta_w - 2\sinn{\beta_w}/2.
$$
The latter expression remains non-negative for all $w\leq w_0$ as claimed, see Figure~\ref{fig: case 2 arrivals}. 

To resume, we verified that for the duration agent $A_0$ searches arc $D'K$ (of length $2\sinn{\beta_w/2}$), agent $A_1$ moves along chord $DE$ (of the same length). If the exit is reported by $A_0$ at time $t \in [0, 2\sinn{\beta_w/2}]$, then the termination cost of that agent is $T_0(t)=1+\alpha_w+t$. Note that in that case, the target is placed at $c(-2\alpha_w-t)$. 
Agent $A_1$ needs to additionally spend the time to reach the target, and this is what we compute next. For this we utilize the parametric equation for a unit speed movement between points $A,B$, which reads as 
$$
l\left(t,A,B\right):=A+\frac{t}{\norm{B-A}}(B-A).
$$
More specifically, for the aforementioned placement of the target, agent $A_1$ lies at 
$$l\left(t,D,E\right)=l\left(t,c(0),c(\beta_w)\right),$$
 and hence its termination cost is, as also per~\eqref{equa: last jump},
$$
T_1(t)=1+\alpha_w+t+\norm{c(-2\alpha_w-t) - l\left(t,c(0),c(\beta_w)\right) }.
$$
We conclude that in this case, the search cost for cost function $g_w$ is 
\begin{equation}
\label{equa: case2 monotone}
\frac{g_w(T_0(t), T_1(t)) }{g_w(1,1)}
=
\frac{g_w(T_0(t), T_1(t)) }{w+1}
=
1+\alpha_w+t
+ \frac{\norm{c(-2\alpha_w-t) - l\left(t,c(0),c(\beta_w)\right) }}{w+1}.
\end{equation}
The latter function admits an analytic formula, whose derivative for $t \leq 2\sinn{\beta_w/2}$ is depicted in Figure~\ref{fig: case2Monotonicity}, together with plane $z=0$. 
The plot shows that the derivative of the function is negative and bounded away from $0$, in the domain $t \in [0, 2\sinn{\beta_w/2}]$, for all $w\in [0,w_0]$. By Figure~\ref{fig: abParameters Arithmetic}, we know that 
$\beta_w\leq \beta_0 =0.925793$ (value $\beta_0$ was used in~\cite{czyzowicz2020priority123} specifically for \wsdisk{g_0}). For this reason, our plot ranging over $w\in [0,2\sinn{1/2}]$ is sufficient for the purpose of showing that  the search cost in decreasing in $t$. Since at $t=0$, the case was already considered in Case 1, we conclude that for the worst case search cost in this case we have $c_2=c_1$.

\textbf{Case 3:} Target is placed in arc $KE$. \\
Note that by the previous case, while agent $A_0$ is moving counter-clockwise along chord $KE$, agent $A_1$ is moving counter-clockwise along arc $ED$, starting from points $K,E$, respectively. 
First we argue that agent $A_0$ reaches $E$ before $A_1$ reaches $D$ for the second time. 
Indeed, agent $A_0$ reaches $E$ in time $1+2\pi - 2\alpha_w - \beta_w$, 
while $A_1$ reaches $D$ for the second time at 
$1+\alpha_w + 2\sinn{\beta_w/2}+\beta_w$. So we have 
$$
1+\alpha_w + 2\sinn{\beta_w/2}+\beta_w
-
(
1+2\pi - 2\alpha_w - \beta_w
)
=
- 2\pi + 2\alpha_w +2 \beta_w + 2\sinn{\beta_w/2}.
$$
The latter expression is depicted in Figure~\ref{fig: case 2 arrivals} and is positive. 

This means that Case 3 has a time span equal to the length of arc $KE$ which equals 
$
2\pi - 2\alpha_w - \beta_w - 2\sinn{\beta_w/2}
$
and during that time, the distance between the two agents is invariant (and equal to the length of arc $KE$). 
Therefore the search cost in this case is the highest when the target is found by $A_0$ as late as possible, i.e. close to $E$ in arc $KE$ (the maximum is not attained, rather we calculate the supremum of the search cost). 
In the limit, the termination cost of agent $A_0$ is $T_0=1+2\pi - \alpha_w - \beta_w$, while agent pays in addition its distance to the exit, 
as in~\eqref{equa: last jump},
that is 
$$
T_1 = 1+2\pi - \alpha_w - \beta_w + 2\sinn{\alpha_w + \beta_w/2 + \sinn{\beta_w/2}}.
$$ 
Overall, the worst case termination cost in this case is 
$$
c_3 
=
\frac{g_w(T_0, T_1) }{g_w(1,1)}
=
\frac{g_w(T_0, T_1) }{w+1}
=
1+2\pi - \alpha_w - \beta_w
+
\frac{2\sinn{\alpha_w + \beta_w/2 + \sinn{\beta_w/2}}}{w+1}
$$

\textbf{Case 4:} Target is placed in arc $DE$. \\
In this case, the target is found by agent $A_1$. The phase has time span equal to the length of arc $ED$, i.e. $2\sinn{\beta_w/}$, and recall that from our previous analysis, agent $A_1$ reaches $D$ for the second time while agent $A_0$ is still moving along chord $ED$ from $E$ towards $D$. 

Agent $A_1$ starts searching arc $ED$ when $A_0$ is at point $K$ still moving counter-clockwise along the perimeter. Agent $A_0$ reaches point $E$ in time equal to the length of arc $KE$, i.e. 
$
2\pi - 2\alpha_w - \beta_w - 2\sinn{\beta_w/2}.
$
In this time window, the distance of the two agents is invariant. In fact, the termination cost in this case is dominated by the cost of Case 3, since now the extra distance towards the target needs to be covered by the ``light'' agent whose termination cost is scaled by $w\in [0,1]$. For this reason, we may assume that the target is reported by $A_1$ in arc $ED$ after $A_0$ reaches $E$, and while $A_1$ moves along chord $ED$, from $E$ towards $D$. 

We reset the clock at time $1+2\pi -\alpha_w - \beta_w$ 
when $A_0$ reaches point $E$, after which time the agent moves along trajectory 
$l(t, c(\beta_w), c(0)), t\in [0,2\sinn{\beta_w/2}]$. 
When we reset the clock, agent $A_1$ lies in 
$c\left(
2\alpha_w + 2\beta_w + 2\sinn{\beta_w/2} - 2\pi
\right)
$, and therefore at additional time $t$ it is located at 
$$c\left(
2\alpha_w + 2\beta_w + 2\sinn{\beta_w/2} - 2\pi - t
\right),$$
where $t \leq 2\alpha_w + 2\beta_w + 2\sinn{\beta_w/2} - 2\pi$. 

To conclude, each $t \in [0, 2\alpha_w + 2\beta_w + 2\sinn{\beta_w/2} - 2\pi]$ corresponds to a placement of the target that is located by agent $A_1$ for her termination cost 
$
T_1(t) = 1+2\pi -\alpha_w - \beta_w + t
$.
Agent $A_0$ will have to pay the additional cost of moving to the target, and hence her termination cost in this case, as in~\eqref{equa: last jump}, 
is 
$$
T_0(t) = 1+2\pi -\alpha_w - \beta_w + t + \norm{l(t, c(\beta_w), c(0)) -  c\left(
2\alpha_w + 2\beta_w + 2\sinn{\beta_w/2} - 2\pi - t
\right)   }
.$$
Hence, in this case the termination cost for cost function $g_w$ becomes 
\ignore{
\begin{equation}
\label{equa: case4 monotone}
\frac{g_w(T_0(t), T_1(t)) }{g_w(1,1)}
=
1+2\pi -\alpha_w - \beta_w + t
+
w\frac{\norm{l(t, c(\beta_w), c(0)) -  c\left(
2\alpha_w + 2\beta_w + 2\sinn{\beta_w/2} - 2\pi - t
\right)}}{w+1}.
\end{equation}
}
\begin{eqnarray}
\label{equa: case4 monotone}
~~~~~~ \frac{g_w(T_0(t), T_1(t)) }{g_w(1,1)} \\
=&
1+2\pi -\alpha_w - \beta_w + t
+
w\frac{\norm{l(t, c(\beta_w), c(0)) -  c\left(
2\alpha_w + 2\beta_w + 2\sinn{\beta_w/2} - 2\pi - t
\right)}}{w+1}. 
\notag
\end{eqnarray}

Now we claim that this cost is increasing in $t \leq 2\alpha_w + 2\beta_w + 2\sinn{\beta_w/2} - 2\pi$. 
Note that the function admits a closed formula, and therefore we can compute its derivative, that is depicted in Figure~\ref{fig: case4Monotonicity}.
The plot confirms that derivative is remains close to $1$, as expected, since the 2 agents are moving towards the same point $c(0)$, one along the arc $ED$ and one along the chord $ED$. The domain in the figure was chosen to be $t\in [0,5]$ which is a superset of time window of the case we are considering, because $ 2\alpha_w + 2\beta_w + 2\sinn{\beta_w/2} - 2\pi \leq 
 2\alpha_0 + 2\beta_0 + 2\sinn{\beta_0/2} - 2\pi
\approx 0.460808$. 

To conclude, in this case, the termination cost is the highest when the target is placed within arc $ED$ arbitrarily close to point $D=c(0)$ (again the maximum is not attained because point $D$ was visited before, so we report the supremum). But then, the target is found by agent $A_1$ in time 
$T_1= 1+\alpha_w+2\sinn{\beta_w/2}+\beta_w$, while agent $A_0$ arrives at point $D$ at time
$$
T_0 =
1+2\pi-\alpha_w-\beta_w+2\sinn{\beta_w/2}.
$$
That is, in the current case the termination cost equals 
$$
c_4 
=
\frac{g_w(T_0, T_1) }{g_w(1,1)}
=
1+2\sinn{\beta_w/2}
+
\frac{\alpha_w +\beta_w +w (-\alpha_w -\beta_w +2 \pi )}{w+1},
$$
which also concludes the termination cost analysis for all cases 1,2,3,4. 

In order to conclude the claim of Lemma~\ref{lem: performance of ab algo for arithmetic} by calculating the termination cost of the $(\alpha_w,\beta_w)$-Detour Algorithm when $w\leq w_0$, which equals
$$
\max\{ c_1, c_2, c_3,c_4\}.
$$
Note that $c_1 = c_2$, and as per the statement of Lemma~\ref{lem: arithmetic upper bound} the values of parameters $\alpha_w,\beta_w$, as a function of $w$, where chosen specifically by requiring that $c_1=c_3=c_4$. 
Therefore, the termination cost is also described by the formula derived in Case 1, that is $1+ \alpha_w + \frac{2\sinn{\alpha_w}}{w+1}$. 
\qed\end{proof}

\subsection{Lower Bounds to Weighted Group Search on the Disk}
\label{sec: lower arithmetic weighted disk}

In this section we give the details of how we obtained the lower bound values to \wsdisk{g_w} reported in Figure~\ref{fig: upper lower arithmetic}.

\begin{theorem}
\label{thm: gw lb}
For all $w\in [0,1]$, no algorithm for \wsdisk{g_w} has evacuation cost less than \\
$\max\left\{
1+\pi,
1+\pi/7 + \frac1{w+1}\coss{3\pi/14} + 5\sinn{\pi/7}
\right\}
\approx
\max\left\{
4.14159,
3.61822+\frac{0.781831}{w+1}
\right\}
.$
\end{theorem}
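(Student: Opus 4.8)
The plan is to derive the two terms of the max separately, each from a known or easily obtained lower bound, and then take the larger one pointwise in $w$.

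\textbf{The term $1+\pi$.} First I would observe that $1+\pi$ is the trivial ``full-circle'' lower bound: an adversary can force at least one agent to traverse essentially the entire perimeter before the target is found. Concretely, for any algorithm on \wsdisk{g_w}, consider a snapshot at a time slightly before the whole circle has been searched; there is always an unexplored boundary point, and placing the target there forces the finder's termination time to be at least $1+2\pi-\epsilon$ in the worst case, or, more carefully, one uses the standard argument that some agent must spend time at least $\pi$ searching the boundary after the unit-time descent, giving $T_{b_i}(I)\geq 1+\pi$ for the finder on the worst placement, whence $g_w(T_0,T_1)/g_w(1,1)\geq T_{\text{finder}}\geq 1+\pi$ regardless of $w$ (since $g_w(x,y)/g_w(1,1)$ is monotone and at least the finder's time when the non-finder's time is at least the finder's). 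Alternatively, and more in the spirit of the paper's framework, this can be read off from Lemma~\ref{lem: lb to disk} in the limiting/degenerate regime, or simply cited as a folklore bound. I would present whichever is cleanest; the point is that it is $w$-independent and dominates for large $w$.

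\textbf{The $n=7$ term.} The second term is an instance of Lemma~\ref{lem: lb to disk} with $f=g_w$ and $n=7$. Since $g_w$ is linear with $g_w(1,1)=w+1$, the relaxation $\textsc{REL}^{g_w}_7(\rho,b,t_0)$ is a linear program, and by the affine-shift property of $g_w$ (namely $g_w(x+a,y+a)/g_w(1,1) = g_w(x,y)/g_w(1,1)+a$, used in~\eqref{equa: pseudo linear}) we have
\begin{align*}
\mathcal L_7^{g_w}
&= \frac{\pi}{7} + \frac{1}{w+1}\min_{\rho\in\mathcal R_7,\, b\in[1]^7}\textsc{REL}^{g_w}_7(\rho,b,1).
\end{align*}
So it suffices to solve the finitely many LPs $\textsc{REL}^{g_w}_7(\rho,b,1)$, symmetry-reduced to the $(7-1)!\,2^{6}/7$ relevant $(\rho,b)$ classes, and identify the minimizing class and its optimal value as a function of $w$. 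The claim is that the minimizer is the same $(\rho,b)$ as in the $w=0$ row of Table~\ref{tab: n-gone lower bounds}, namely $b=\{1,0,0,1,0,1,0\}$, $\rho=(1,2,3,7,4,6,5)$, and that its optimal LP value is $1+\frac{1}{w+1}\coss{3\pi/14}+5\sinn{\pi/7}$ — note the $\frac{1}{w+1}$ multiplies exactly the ``chord/detour'' term $\coss{3\pi/14}$ that the heavy agent (weight $1$) contributes, while the $5\sinn{\pi/7}$ arc-length term and the $1$ are unscaled, consistent with the $w=0$ specialization giving the Table~\ref{tab: n-gone lower bounds} entry $1+\coss{3\pi/14}+5\sinn{\pi/7}$. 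Adding $\pi/7$ yields the stated second term $1+\pi/7+\frac{1}{w+1}\coss{3\pi/14}+5\sinn{\pi/7}$.

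\textbf{Combining and the main obstacle.} Finally I would take the maximum of the two bounds, which is legitimate since both are valid lower bounds for every $w$; the numerical approximations follow by evaluating $\pi/7+1+5\sinn{\pi/7}\approx 3.61822$ and $\coss{3\pi/14}\approx 0.781831$. The main obstacle is the second term: one must argue that across all $w\in[0,1]$ the optimizing $(\rho,b)$-class does not change (so that a single closed formula is valid on the whole interval) and that the LP optimum genuinely has the clean form $1+\frac{\coss{3\pi/14}}{w+1}+5\sinn{\pi/7}$ rather than some piecewise-defined expression. This is where the paper's computer-assisted but symbolic LP solving does the work: solving $\textsc{REL}^{g_w}_7(\rho,b,1)$ parametrically in $w$ and checking optimality certificates uniformly in $w$. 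I would therefore structure the proof as: (i) state the trivial $1+\pi$ bound with a one-line justification; (ii) invoke Lemma~\ref{lem: lb to disk} and the affine-shift identity to reduce to $n=7$ LPs; (iii) report the symbolic solution of the $n=7$ relaxations, giving the minimizing class and the parametric optimal value; (iv) conclude by taking the max. The honest ``hard part'' is entirely inside step (iii) and is discharged by the symbolic computation, exactly as the paragraph following Figure~\ref{fig: upper lower arithmetic} in the excerpt already flags (``dealing with lower bounds to \wsngon{g_w}{9} was computationally too demanding'').
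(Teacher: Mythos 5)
Your proposal is correct and takes essentially the same route as the paper: the trivial $1+\pi$ bound (the paper's Lemma~\ref{lem: weak g0 lb}), the $n=7$ bound obtained from Lemma~\ref{lem: lb to disk} together with the affine-shift identity~\eqref{equa: pseudo linear} and symbolic solution of the relaxations $\textsc{REL}^{g_w}_7(\rho,b,1)$ (the paper's Lemma~\ref{lem: gw 7gone}), followed by a pointwise maximum. One immaterial slip: the paper notes that the closed form equals $\min_{\rho\in\mathcal R_7, b\in[1]^7}\textsc{REL}^{g_w}_7(\rho,b,1)$ only for $w\le 0.8$ (for larger $w$ the true minimum is slightly larger, and the $1+\pi$ term dominates there anyway), so in your step (iii) you do not need the minimizing $(\rho,b)$ class to be constant in $w$ — it suffices that the stated formula lower-bounds the minimum for every $w\in[0,1]$.
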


We start with a weak lower bound. 
\begin{lemma}
\label{lem: weak g0 lb}
No algorithm for \wsdisk{g_w} has evacuation cost less than 
$1+\pi$.
\end{lemma}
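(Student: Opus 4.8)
The plan is to establish the lower bound of $1+\pi$ by a direct adversarial argument that does not require the full machinery of the linear-programming framework, since the bound is weak and essentially follows from the fact that \emph{some} agent must physically traverse enough of the disk's perimeter before any target is found.

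First I would observe that, by the definition of $g_w(x,y)=wx+y$, we have $g_w(1,1)=w+1$ and that for any target $I$ on the unit circle, $g_w(\|I\|_2,\|I\|_2)=g_w(1,1)=w+1$, so the competitive search cost equals $\frac{w\,T_0(I)+T_1(I)}{w+1}$. Since $w\in[0,1]$, the coefficient of $T_1(I)$ is at least as large as that of $T_0(I)$, so it suffices to lower bound $T_1(I)$ alone: indeed $\frac{w\,T_0(I)+T_1(I)}{w+1}\ge \frac{w\,T_1(I)+T_1(I)}{w+1}\cdot\frac{?}{}$ is not quite right, so more carefully I would argue $\frac{w\,T_0(I)+T_1(I)}{w+1}\ge \frac{T_1(I)}{w+1}\ge \frac{T_1(I)}{2}$ combined with a symmetric statement, or better, note that whoever the finder is, the \emph{non-finder} must also reach $I$, and I will bound the cost using only the time the target is first reached plus the non-finder's catch-up distance.

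The key step is the combinatorial covering argument, analogous to Lemma~\ref{lem: ngone bound implies disk bound}: run an arbitrary algorithm for \wsdisk{g_w} for time $1+\pi-\epsilon$. During $[0,1+\pi-\epsilon]$, each agent travels a curve of length at most $1+\pi-\epsilon$; the portion of that curve lying on the unit circle therefore has arc-length at most $1+\pi-\epsilon$, so the two agents together cover an arc-measure at most $2+2\pi-2\epsilon < 2\pi$ of the circle's total circumference $2\pi$. Hence there is a point $I$ on the unit circle not visited by either agent before time $1+\pi-\epsilon$; I would want $I$ to additionally be a point that forces both agents far away, so I would pick $I$ in the (nonempty) complement arc and note that at time $1+\pi-\epsilon$ neither agent is at $I$, so $\min\{T_0(I),T_1(I)\}\ge 1+\pi-\epsilon$. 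Then whichever agent is the finder, using~\eqref{equa: last jump}, $T_{1-i}(I)\ge T_i(I)\ge 1+\pi-\epsilon$, giving $g_w(T_0(I),T_1(I))\ge (w+1)(1+\pi-\epsilon)$, hence search cost at least $1+\pi-\epsilon$; letting $\epsilon\to 0$ yields the bound.

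The main obstacle — really the only subtlety — is justifying the ``$\ge 1$'' part cleanly: at time $1+\pi-\epsilon$ the agents might have wasted time not on the circle at all, but the adversary can always wait until after time $1+\pi-\epsilon$ to have the target ``discovered,'' and since any target is at distance exactly $1$ from the origin while the first $1$ unit of any agent's travel can reach the circle, I should make sure the covering count is against arc-length on the circle specifically rather than total path length; the clean way is: the finder reaches $I$ at some time $t$, and to do so its trajectory restricted to $[0,t]$ must include a sub-path from the origin to $I$, so $t\ge \|I\|_2 = 1$, while simultaneously $t\ge 1+\pi-\epsilon$ from the covering argument, so $T_1(I)\ge\max\{1,1+\pi-\epsilon\}=1+\pi-\epsilon$. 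I would also remark that this matches the trivial upper-bound intuition and that the stronger bounds of Theorem~\ref{thm: gw lb} come from the LP framework applied to \wsngon{g_w}{7}.
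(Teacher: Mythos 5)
Your overall strategy is the same as the paper's (the paper argues tersely that any algorithm needs time at least $1+\pi$ to exhaust the domain, so an adversarial target forces both termination times to be at least $1+\pi-\epsilon$, giving cost $\frac{g_w(1+\pi-\epsilon,\,1+\pi-\epsilon)}{w+1}=1+\pi-\epsilon$), but your covering count, as written, is false. You claim that after running for time $1+\pi-\epsilon$ the two agents cover arc-measure at most $2+2\pi-2\epsilon<2\pi$; in fact $2+2\pi-2\epsilon>2\pi$ for all $\epsilon<1$, so no uncovered point is guaranteed by that inequality, and the existence of the unexplored target $I$ --- the heart of the argument --- does not follow. The patch you then propose does not repair this: showing $t\ge\lVert I\rVert_2=1$ for the finder is trivial and you still invoke ``$t\ge 1+\pi-\epsilon$ from the covering argument,'' i.e.\ the very step that is broken.

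The correct accounting, which is exactly what makes the ``$+1$'' appear (and is the same bookkeeping used in the proof of Lemma~\ref{lem: ngone bound implies disk bound}), is that each agent starts at the origin, at distance $1$ from every point of the unit circle, so during $[0,1]$ neither agent can be on the circle at all; hence within time $1+\pi-\epsilon$ the portion of each trajectory lying on the circle has arc-length at most $\pi-\epsilon$, and the two agents together cover at most $2\pi-2\epsilon<2\pi$. This yields an unexplored point $I$ at time $1+\pi-\epsilon$, whence $\min\{T_0(I),T_1(I)\}\ge 1+\pi-\epsilon$, and by~\eqref{equa: last jump} the non-finder is no earlier, so $g_w(T_0(I),T_1(I))\ge(w+1)(1+\pi-\epsilon)$ and letting $\epsilon\to0$ gives the bound. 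With that one-line correction your proof coincides with the paper's; the preliminary detour about bounding via $T_1$ alone is unnecessary and can be dropped.
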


\begin{proof}
An arbitrary algorithm for \wsdisk{g_w} needs time at least $1+\pi$ to search the entire domain. 
This means that there is always a target placement for which the termination time of each agent is at least $1+\pi - \epsilon$, resulting in overall search cost 
$
\frac{g_w(1+\pi - \epsilon,1+\pi - \epsilon)}{w+1}=1+\pi - \epsilon
.$
\qed\end{proof}
Notably, the result is tight for $w=1$. Indeed, by Lemma~\ref{lem: performance of ab algo for arithmetic} the upper bound for \wsdisk{g_1} uses the $(\pi,0)$-Detour Algorithm with performance $1+d_1+2\sinn{d_1}2 = 1+\pi$ (note that $d_1=\arccoss{-\tfrac{1+1}{2}}=\pi$). 

For small values of $w$, we obtain better bounds by deriving lower bounds to \wsngon{g_w}{n} for $n=7$. 
Unlike how we approached \wsdisk{\textrm{proj}_2} (which is the same as \wsdisk{g_0}), this time we need to solve $\textsc{REL}^{g_w}_n\left( \rho,b,1+\frac{\pi}n \right)$ not only for all permutations $\rho$ and binary strings $b$, but also for enough many values of $w \in [0,1]$. In this direction, we use $n=7$ and compute the lower bounds to \wsdisk{g_w} from $w=0$ up to $1$ with step size $0.001$\footnote{Considering $7$-\textsc{Gons} for our proof is the highest value we could handle computationally, due to the large number of $w$ values we needed to consider. For comparison, the results we obtained for \wsdisk{\textrm{proj}_2} using $9$-\textsc{Gons} took several days to be computed. The order by which the number of configurations increase between $n$-\textsc{Gons} and $(n+1)$-\textsc{Gons}, each time computing $\textsc{REL}^{g_w}_n\left( \rho,b,1+\frac{\pi}n \right)$, is $2n$, but this does not take into account the additional computation cost for solving much larger linear programs with $\Theta(n^2)$ variables and $\Theta(n^3)$ constraints.
}. 
The following is a strict generalization of the results in Table~\ref{tab: n-gone lower bounds} pertaining to \wsngon{g_w}{7} with $n=7$, and was obtained by computer assisted calculations for determining the minimum value of $\textsc{REL}^{g_w}_7\left( \rho,b,1 \right)$ over all permutations $\rho$ and binary strings $b$.

\begin{lemma}
\label{lem: gw 7gone}
No algorithm for \wsngon{g_w}{7} has evacuation cost less than 
$
1+\frac1{w+1}\coss{\tfrac{3\pi}{14}} + 5\sinn{\tfrac{\pi}{7}}.
$
\end{lemma}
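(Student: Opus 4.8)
The plan is to establish the lower bound via Corollary~\ref{cor: optimal n-gone} and Lemma~\ref{lem: lb to n-gone}, that is, by exhibiting that for the specific permutation and binary string reported in Table~\ref{tab: n-gone lower bounds} for $n=7$ (namely $\rho=(1,2,3,7,4,6,5)$ and $b=\{1,0,0,1,0,1,0\}$), the relaxation $\textsc{REL}^{g_w}_7(\rho,b,1)$ has optimal value exactly $1+\frac1{w+1}\cos(3\pi/14)+5\sin(\pi/7)$, and that no other $(\rho,b)$ pair yields a smaller value. Since $g_w(x,y)=wx+y$ is linear, $\textsc{REL}^{g_w}_7(\rho,b,1)$ is a linear program (after the standard epigraph trick for the $\max$ in the objective), so for each fixed $(\rho,b)$ its optimum is attained at a vertex and comes with a dual certificate. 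The first step is therefore to set up this LP explicitly for the claimed optimal configuration, identify the tight constraints at optimality, and verify by exhibiting primal and dual feasible solutions of equal objective value that the optimum is the stated closed-form expression — this is exactly the symbolic, computer-assisted verification described in the paragraph preceding Table~\ref{tab: n-gone lower bounds}, and the expression $\cos(3\pi/14)+5\sin(\pi/7)$ is recognizable as a sum of chord/arc lengths of the regular $7$-gon inscribed in the unit circle, so the geometric content is that the distinguished agent must traverse a path whose length decomposes this way.

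Next I would handle the minimization over all $(\rho,b)$. Using the symmetry reductions already noted in the excerpt (the first visited vertex may be taken to be vertex $0$, and the second among $1,\dots,\lfloor(n-1)/2\rfloor+1$), the number of configurations for $n=7$ is $(n-1)!2^{n-1}/n = 6!\cdot 2^6/7$, a finite and computationally tractable list. For each such $(\rho,b)$ one solves the LP $\textsc{REL}^{g_w}_7(\rho,b,1)$ — note the objective and all but the metric-realization constraints have coefficients independent of $w$, and $w$ enters only through the linear objective $g_w$, so one can either treat $w$ parametrically or, as the authors did, sample $w$ on a fine grid in $[0,1]$ — and take the pointwise minimum over configurations. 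The claim is that this minimum equals the stated formula for all $w\in[0,1]$; in particular the minimizing configuration does not change with $w$ over this range. The lemma statement phrases the bound for \wsngon{g_w}{7} directly (i.e.\ with $t_0=1$), consistent with $\textsc{REL}^{g_w}_7(\rho,b,1)$ being a lower bound to the optimal $(\rho,b)$-algorithm on the $7$-gon by Lemma~\ref{lem: lb to n-gone} combined with Corollary~\ref{cor: optimal n-gone}.

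The main obstacle I anticipate is twofold. First, conceptually, one must be careful that the relaxation $\textsc{REL}$ rather than the exact $\textsc{NLP}$ is what is being solved: because the relaxed metric need not embed in $(\reals^2,\ell_2)$ (the excerpt explicitly notes the $n=7$ solution is \emph{not} so embeddable), the value obtained is only guaranteed to be a \emph{lower} bound, which is exactly what the lemma claims — so no embeddability argument is needed, but one must resist the temptation to "verify" the bound by constructing a planar configuration, since none exists. Second, computationally, the honest content of the proof is the exhaustive LP solve over $\approx 6!\cdot 2^6/7 \approx 6583$ configurations, each an LP with $\Theta(n^3)$ triangle-inequality constraints and $\Theta(n^2)$ variables, carried out symbolically so that the winning configuration's optimum is certified in closed form. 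The delicate part is ensuring the symbolic solver returns genuinely exact values (the reported $\cos(3\pi/14)+5\sin(\pi/7)$) together with verified optimality certificates, and that the grid over $w$ is fine enough — or better, that the parametric LP analysis confirms — that the minimizing vertex is stable across all of $[0,1]$ so that no configuration undercuts the claimed bound at some unsampled $w$. Once that verification is in hand, the lemma follows immediately by reading off the minimum from the computation, exactly as Table~\ref{tab: n-gone lower bounds} does for the $w=0$ specialization.
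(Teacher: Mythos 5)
Your proposal matches the paper's own proof in essence: the lemma is obtained precisely by exhaustively (after the symmetry reduction) solving the relaxations $\textsc{REL}^{g_w}_7(\rho,b,1)$ symbolically over all configurations $(\rho,b)$ and a fine grid of $w$ (step $0.001$), and reading off the minimum via Lemma~\ref{lem: lb to n-gone} and Corollary~\ref{cor: optimal n-gone}, which is exactly the computer-assisted computation the paper reports. One small caveat: the paper states that the minimum equals the closed form only for $w\leq 0.8$ (for larger $w$ the LP value is slightly larger, so the stated bound still holds and is anyway subsumed by the $1+\pi$ bound of Lemma~\ref{lem: weak g0 lb}), so your claim that the minimizing configuration is stable over all of $[0,1]$ is slightly stronger than what the computations show, though this does not affect the validity of the lower bound.
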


Strictly speaking the bound of Lemma~\ref{lem: gw 7gone} equals $\min_{\rho \in \mathcal R_7, b \in [1]^7 }
\textsc{REL}^{g_w}_7\left( \rho,b,1 \right)$ only for $w\leq 0.8$ (for higher values, this technique gives a slightly better lower bound), however, this is already subsumed by the bound of Lemma~\ref{lem: weak g0 lb}. 
We conclude that for each $w$, and using Lemma~\ref{lem: lb to disk} and the calculations~\eqref{equa: pseudo linear}, 
no algorithm for \wsdisk{g_w} has evacuation cost less than 
$$
\frac{1}{g_w(1,1)} \min_{\rho \in \mathcal R_7, b \in [1]^7 }
\textsc{REL}^{g_w}_7\left( \rho,b,1+\frac{\pi}7 \right)
=
\frac{\pi}7+
\min_{\rho \in \mathcal R_7, b \in [1]^7 }
\textsc{REL}^{g_w}_7\left( \rho,b,1 \right),
$$
where in particular $\textsc{REL}^{g_w}_7\left( \rho,b,1 \right)$ is a lower bound to the optimal $(\rho,b)$-algorithm for \wsngon{g_w}{7}.
To resume, we solve $\textsc{REL}^{g_w}_7(\rho,b,1)$ for all permutations $\rho \in \mathcal R_7$ and all binary strings $b \in [1]^n$, and we report the smallest values in Lemma~\ref{lem: gw 7gone}. Overall, this implies that no algorithm for \wsdisk{g_w} has search cost less than 
$1+\pi/7 + \frac1{w+1}\coss{3\pi/14} + 5\sinn{\pi/7}.$
This observation, together with Lemma~\ref{lem: weak g0 lb}  give the proof of Theorem~\ref{thm: gw lb}. 
Lastly we note that the transition value of $w$ for which the lower bound of Lemma~\ref{lem: weak g0 lb} is weaker equals 
$$
\tfrac{7 \cos \left(\frac{3 \pi }{14}\right)}{6 \pi -35 \sin \left(\frac{\pi }{7}\right)}-1\approx 0.493827.
$$


\vspace{-.3cm}
\section{Conclusion}
\label{sec: conclusions}
\vspace{-.2cm}
It this work we studied the weighted group search problem on the disk with 2 agents operating in the wireless model, a problem that was previously studied on the line. 
The weighted problem on the disk is a generalization of the so-called priority evacuation problem, for which the best upper and lower bounds known exhibit a significant gap. 
For the problem on the disk, we designed and analyzed algorithms that adapt with the underlying cost function. We complemented our results by providing lower bounds for the entire spectrum of arithmetic weighted average cost functions. 
Our most significant contribution is the framework we developed in order to prove the lower bounds. More specifically, we introduced linear programs, which arise as relaxations to non-linear programs modeling the behavior of optimal solutions to combinatorial search type problems. The framework is applicable to more general cost functions, as well as it can be adjusted to provide lower bounds to highly asymmetric searchers' specifications. Among others, we demonstrate the power of this technique by improving the previously best lower bound known for priority evacuation, from $4.38962$ to $4.56798$. The bound can be further improved if one uses the provided framework and utilizes more efficient computational tools, e.g. software tailored to solving LPs (the current symbolic results were obtained using \textsc{Mathematica}). 
One of the the most challenging future directions in the area would be to consider other searchers' communication models, including the face-to-face model, that has been proved to be challenging even with standard cost functions.

\bibliographystyle{abbrv}
\bibliography{LPbasedLowerBounds}

\end{document}